\newcommand{\ket}[1]{\ensuremath{|{#1}\rangle}}
\newcommand{\bra}[1]{\ensuremath{\langle{#1}|}}
\newcommand{\inner}[2]{\ensuremath{\langle{#1}|{#2}\rangle}}
\newcommand{\den}[1]{{\mathbb I_{#1}}}
\newcommand{\op}[1]{\mathsf{#1}}
\def\squareforqed{\hbox{\rlap{$\sqcap$}$\sqcup$}}
\def\qed{\ifmmode\squareforqed\else{\unskip\nobreak\hfil
\penalty50\hskip1em\null\nobreak\hfil\squareforqed
\parfillskip=0pt\finalhyphendemerits=0\endgraf}\fi}
\newtheorem{theorem}{Theorem}
\newtheorem{lemma}[theorem]{Lemma}
\newtheorem{proposition}[theorem]{Proposition}
\newenvironment{proof}{\begin{trivlist}\item[]{\flushleft\bf Proof }}
{\qed\end{trivlist}}
\begin{document}

\title{\bf Graph States, Pivot Minor,  and \\ Universality of $(X,Z)$-measurements}

\author{Mehdi Mhalla, Simon Perdrix}

\date{\small CNRS,\\ LIG, University of Grenoble, France
}

\maketitle
\begin{abstract}  
The graph state formalism offers strong connections between quantum information processing and graph theory. 
Exploring these connections, first we show that any graph is a pivot-minor of a planar graph, and even a pivot minor of a triangular grid. Then, we prove that the application of measurements in the $(X,Z)$ plane (i.e. one-qubit measurement according to the basis $\{\cos(\theta) \ket 0+ \sin (\theta) \ket 1,\sin(\theta) \ket 0- \cos (\theta) \ket 1\}$ for some $\theta$) over graph states represented by triangular grids is a universal measurement-based model of quantum computation. These two results are in fact two sides of the same coin, the proof of  which is a combination of graph theoretical and quantum information techniques. 
\end{abstract}

\section{Introduction}

 In 2001, Raussendorf and Briegel \cite{RB01,RBB03} introduced a model for quantum computation based on measurements  where  one-qubit measurements 
  are applied on an initial entangled state, called  \emph{graph state}. 
This model, called the \emph{one-way model} is not only a very promising model for the physical implementation of a quantum computer \cite{Wetal,Petal}, but it has also led to several theoretical breakthroughs in quantum information processing. For instance, the one-way model has been proved to be more favorable to the parallelization of quantum operations than the usual quantum circuits \cite{BKP10}; the one-way model has also given rise to the elaboration of several  protocols  like the blind quantum computing \cite{BFK08}, and the quantum secret sharing with graph states \cite{MS08,KMMP09,JMP11}. 

The graph state formalism, which is used to describe the initial entangled state in the one-way model, has been broadly studied this last decade, the survey by Hein  \emph{et~al.}
 \cite{HDERNB06-survey} provides an  excellent introduction to the domain and has more than 200 references. The graph state formalism is a powerful framework for characterizing quantum information properties in a combinatorial way, using graph theory. For instance, the ability of performing a unitary (or more generally an information preserving evolution) on a given graph state has been characterized as the existence of a certain kind of \emph{flow} in the corresponding graph \cite{DK06,DKPP09,BKMP07,MMPST11}.
 
Another example of the graphical characterization of quantum information properties is that two graphs  which are locally equivalent (i.e. equal up to local complementation, a graph transformation introduced by Kotzig \cite{kotzig} and investigated among others by Bouchet \cite{Fraysseix81,Bouchet85-connectivity}) are representing the same entanglement in the sense that the corresponding graph states are LC equivalent \cite{VdN04}. As a consequence, the rank-width \cite{OumSeymour} of a graph, which is invariant by local complementation,  is  a measure of entanglement of the corresponding quantum state \cite{VdNMDB06}. Moreover, the minimal degree up to local complementation \cite{HMP06} and the weak odd domination \cite{GJMP} are other examples of graph theoretical characterization of quantum properties: the minimal distance of a quantum code for the minimal degree up to local complementation \cite{ChuangShor}; and the threshold of a graph state based quantum secret sharing schemes for  weak odd domination \cite{GJMP}. 

In the present paper, we mainly prove two important results, both come from the strong connections, offered by the graph state formalism, between quantum information processing and graph theory. First we show that any graph is a pivot-minor of a planar graph (a pivot minor of a graph $G$ is a graph one can obtain by performing pivotings -- a certain combination of local complementations -- and vertex deletions over the graph $G$). Moreover, we prove that the application of measurements in the $(X,Z)$ plane (i.e. one-qubit measurement according to the basis $\{\cos(\theta) \ket 0+ \sin (\theta) \ket 1,\sin(\theta) \ket 0- \cos (\theta) \ket 1\}$ for some $\theta$) over graph states represented by triangular grids is a universal model of quantum computation. These two results are in fact two sides of the same coin, the proof of  which is a combination of graph theoretical and quantum information techniques. In particular, the former, i.e. the proof that any graph is a pivot-minor of a planar graph,  is an example of `classical' (in the sense `non quantum') property which is proved  using quantum arguments. Other such classical results with quantum proofs are listed in this paper \cite{DdW09}.

The latter result, the universality of $(X,Z)$-measurement over triangular grids, is an improvement in the quest of the minimal resources for measurement-based quantum computation \cite{N01,per05a,leu04,VdNMDB06,Per07a,BFK08,Taka10}, minimisation which is essential for the actual physical implementation of the model. Several regular grids (square, triangular, and hexagonal grids) are known to be universal resources for the one-way model \cite{VdNMDB06}, however the universality of these graph states is based on the use of single qubit measurements in the three possible axis $X$, $Y$ and $Z$ of the Bloch sphere, more precisely the performed measurements are according to $Z$ (measurement in the standard $\{\ket 0, \ket 1\}$ basis) and in the $(X,Y)$ plane (measurement in the basis $\{\frac{\ket 0+e^{i\theta}\ket 1}{\sqrt{2}},\frac{\ket 0-e^{i\theta}\ket 1}{\sqrt{2}}\}$ for some $\theta$). Alternatively, it has been proven in \cite{BFK08} that the measurements in the $(X,Y)$ plane over `brickwork' states (see Figure \ref{fig0}) are universal. The tradeoff to the use of the single $(X,Y)$ plane for the measurement is that the brickwork state is not regular, so not as easy as a grid to prepare. In the present paper, we show that measurements in a single plane, namely the $(X,Z)$ plane, over the triangular grid is universal.

\begin{figure}[h]
   \begin{center}
\includegraphics[width=8cm]{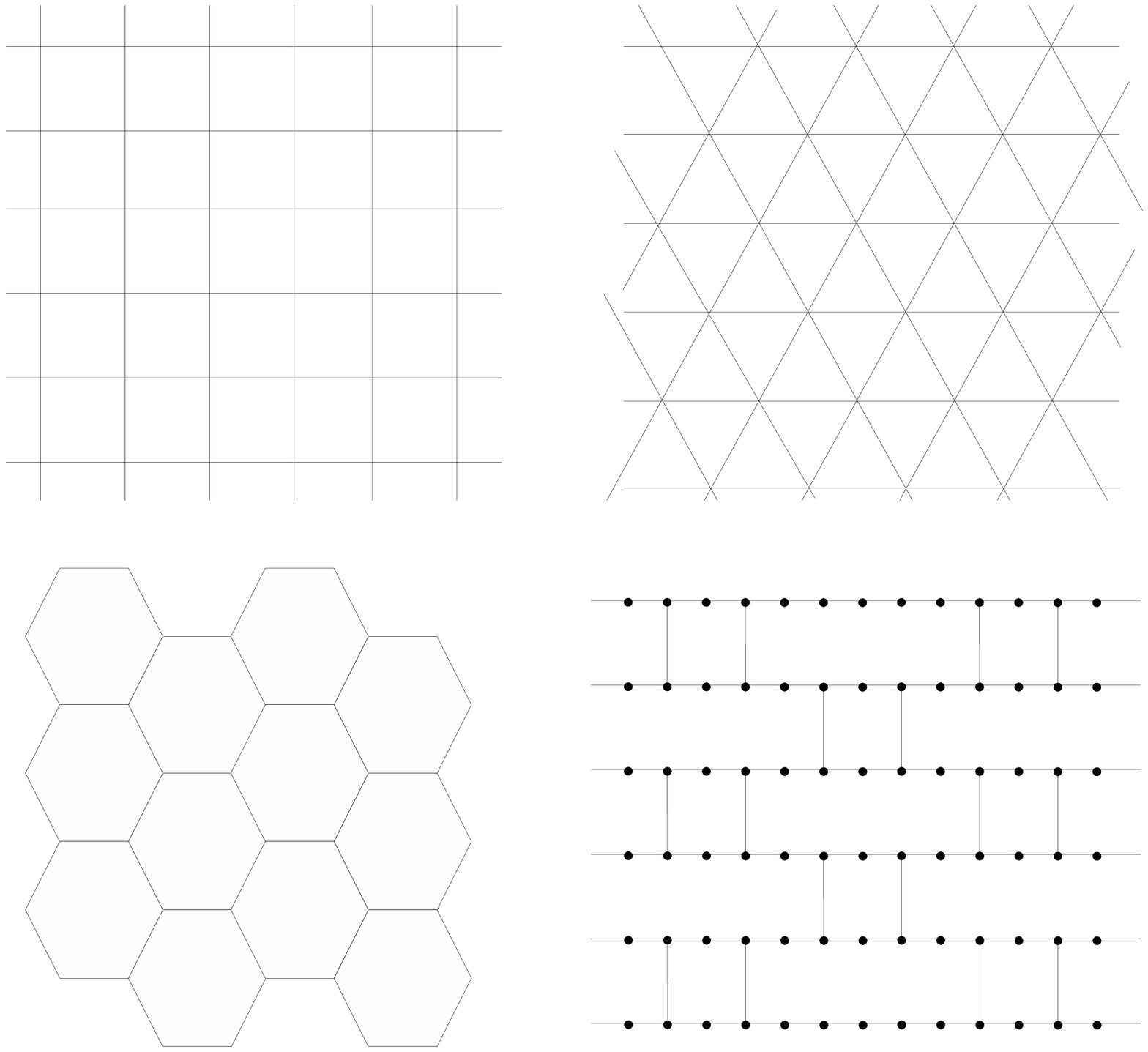}\vspace{-1cm}
\end{center}
\caption{\label{fig0} Top left: squarred grid (or cluster state) -- Top right: triangular grid  -- Bottom left: hexagonal grid  -- Bottom right: 'brickwork' state.}
\end{figure}

The paper is structured as follows: first, the  standard graph theoretical notions  used in the paper are presented, then are given the fundamental properties of the signed graph states, an extension of the graph state formalism.  
In section \ref{secmeas},  the actions of $X$- and $Z$-measurements over signed graph states are graphically characterized.
Section \ref{sec:pivotplanar} is dedicated to the proof, based on quantum arguments,  that any graph is pivot minor of a planar graph. 
Finally, in section \ref{sec:univ}, we prove that measurements in the $(X,Z)$ plane on triangular grids are universal for quantum computing.
\section{Pivot minor}
\label{secpiv}

 Given a graph $G$ and a vertex $u$, the local complementation on the vertex $u$ switches edges and non-edges in the neighborhood of $u$: it transforms $G$ into $G*u=G \Delta K_{{\cal{N}}(u)}$ where $\Delta$ is the symmetric difference and  $ K_{{\cal{N}}(u)}$ is the complete graph over the neighborhood of $u$. 
Pivoting (also called edge local complementation) \cite{Bouchet87,Oum_pivot} on an edge $uv$ is defined as a sequence of local complementations on the two vertices of the edge $G \wedge uv=G*u*v*u$ (see Figure \ref{fg1}).

\psfrag{u}[c][][1]{$u$}
\psfrag{v}[c][][1]{$v$}
\psfrag{A}[c][][1]{$A\,$}

\psfrag{B}[c][][1]{$B$}
\psfrag{C}[c][][1]{$C$}
\psfrag{D}[c][][1]{$D$}

\begin{figure}[h]
   \begin{center}
\includegraphics[width=8cm]{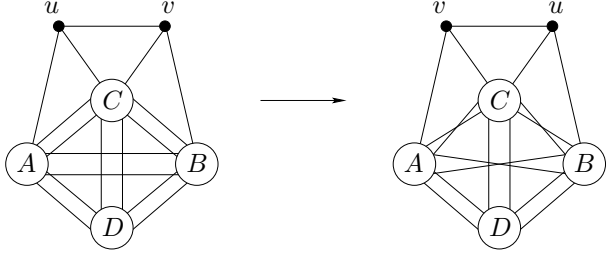}\vspace{-0.5cm}
\end{center}
\caption{\label{fg1}Pivoting on $uv$. $C={\cal N}(u)\cap {\cal N}(v)$, $A={\cal N}(u)\setminus C$, $B={\cal N}(v)\setminus C$,  and $D$ is the rest of the vertices. Pivoting on $uv$ exchanges vertices $u$ and $v$, and for any $(x,y) \in (A\times B)\cup(B\times C)\cup(A\times C)$, the edge $xy$ is deleted if $xy$ was an edge, and added otherwise.}
\end{figure}    
A graph $G$ is a pivot minor of $H$ if and only if $G$ can be obtained from $H$ by a sequence of pivotings and then a sequence of vertex deletions. 

It has been proven \cite{tuck60,Geelen97} 
 that  for any  sequence of pivotings, there exists an equivalent sequence of pivotings where each vertex  is used at most once. 
  In particular, Kwon and Oum characterized, using the rank-width, the graphs which are pivot minors of  trees and paths \cite{ojoun11} and Oum
 proved that any bipartite circle graph is a pivot minor of all line graphs with large rank-width  \cite{oumpiv}. 

\section{Signed graph state formalism}

In this section, we review the fundamentals of graph states \cite{HDERNB06-survey} and signed graph states \cite{DAB03}. The graph state formalism consists in representing some particular quantum states using graphs. Given a graph $G$ of order $n$, the corresponding quantum state is defined as the common fixpoint of $n$ operators depending on the  graph $G$. Each of these operators is a Pauli operator on $n$ qubits.  
The group of Pauli operators acting on a set $V$ of $n$ qubits is generated by 
$\{\op X_u, \op Z_u, i.I\}_{u\in V}$, where $I$ is the identity,  
$\op X_u$ (resp. $\op Z_u$) is an operator which acts as the identity on $V\setminus \{u\}$ and as $\op X : \ket x\mapsto \ket {\overline x}$ (resp. $\op Z : \ket x\mapsto (-1)^x \ket {x}$) on qubit $u$. 
More generally, for any subset $S\subseteq  V$, let $\op X_S =\prod_{u\in S} \op X_u$ and $\op Z_S = \prod_{u\in S} \op Z_u$.

\paragraph{}For a given simple undirected graph $G=(V,E)$ of order $n$, the \emph{graph state}  $\ket G\in \mathbb C^{n}$ is the unique quantum state\footnote{In fact $\ket G$ is unique up to a global phase which is irrelevant in quantum computing. This global phase is choosen s.t. $\bra {0^n} G\rangle = \frac 1{\sqrt{2^{n}}}$} such that for any $u\in V$, 
$$\op X_u \op Z_{\mathcal {\cal{N}}(u)} \ket G = \ket G$$

To increase the expressive power of the graph state formalism, following \cite{DAB03} a \emph{sign} is added to the graph: for a given graph $G=(V,E)$ and a subset $S\subseteq V$ of vertices, let $$\ket {G;S} = \op Z_S \ket G$$

\begin{proposition} \label{prop:basis}
For any graph $G=(V,E)$, 
$\{\ket{G;S}\}_{S \subseteq V}$ 
form an orthonormal basis. 
\end{proposition}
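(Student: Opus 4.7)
The plan is to reduce the claim to a dimension count: since $|\mathcal P(V)| = 2^n = \dim \mathbb C^{2^n}$, it suffices to show that the family $\{\ket{G;S}\}_{S\subseteq V}$ is orthonormal.

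Normalization is immediate, since $\op Z_S$ is unitary and $\ket G$ is a unit vector, so $\|\ket{G;S}\| = \|\ket G\| = 1$. For orthogonality, the natural approach is stabilizer-theoretic: I would show that each $\ket{G;S}$ is a simultaneous eigenvector of the $n$ commuting operators $K_u := \op X_u \op Z_{\mathcal N(u)}$, with the tuple of eigenvalues depending injectively on $S$. Concretely, using the commutation relations $\op X_u \op Z_v = (-1)^{\delta_{uv}}\op Z_v \op X_u$ and that all $\op Z$'s mutually commute, one computes
\[
K_u \ket{G;S} \;=\; \op X_u \op Z_{\mathcal N(u)} \op Z_S \ket G \;=\; (-1)^{[u\in S]}\, \op Z_S\, \op X_u \op Z_{\mathcal N(u)} \ket G \;=\; (-1)^{[u\in S]} \ket{G;S},
\]
where the last equality uses the defining fixpoint property of $\ket G$.

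From here, orthogonality follows directly: for $S\neq T$, pick any $u\in S\triangle T$; then $\ket{G;S}$ and $\ket{G;T}$ are eigenvectors of the Hermitian operator $K_u$ with eigenvalues $+1$ and $-1$, hence $\inner{G;S}{G;T}=0$. Combined with normalization and the dimension count, the family is an orthonormal basis of $\mathbb C^{2^n}$.

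There is no real obstacle here; the only point requiring care is the bookkeeping of commutation signs when sliding $\op X_u\op Z_{\mathcal N(u)}$ past $\op Z_S$, which is completely mechanical. An alternative route, which I would mention only if a more self-contained proof is desired, is to compute $\inner{G;S}{G;T} = \bra G \op Z_{S\triangle T}\ket G$ directly and then, for $R := S\triangle T \neq \emptyset$, pick $u\in R$ and use that $K_u$ anticommutes with $\op Z_R$ together with $K_u\ket G = \ket G$ to conclude that $\bra G \op Z_R\ket G = -\bra G \op Z_R\ket G = 0$.
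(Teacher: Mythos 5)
Your proof is correct and rests on the same mechanism as the paper's: the anticommutation of $\op X_u$ with $\op Z_{S\Delta S'}$ for $u\in S\Delta S'$, combined with the fixpoint property $\op X_u\op Z_{\mathcal N(u)}\ket G=\ket G$; indeed, the paper carries out exactly the direct computation you sketch as your ``alternative route'', namely $\bra G\op Z_{S\Delta S'}\ket G=-\bra G\op Z_{S\Delta S'}\ket G=0$. Your primary packaging --- each $\ket{G;S}$ as a simultaneous eigenvector of the Hermitian stabilizers $K_u$ with eigenvalue $(-1)^{[u\in S]}$ --- is an equivalent rephrasing, and your explicit normalization and dimension count only make explicit what the paper leaves implicit.
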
 

\begin{proof}
First,  notice that for any $u\in V$ and any $S\subseteq V$, $\op X_u$ and $\op  Z_{S}$ commute if $u\notin S$ and anticommute if $u\in S$. 
For any distinct $S,S'\subseteq V$, let $ u\in S \Delta S'$.  \begin{eqnarray*}\inner{G;S}{G;S'} &=& \bra G \op Z_{S} \op Z_{S'} \ket G\\ &=& \bra {G}\op Z_{S\Delta S'} \op X_u \op Z_{\mathcal {\cal{N}}(u)} \ket G \\&=&
-  \bra {G} \op X_u \op Z_{\mathcal {\cal{N}}(u)} \op Z_{S\Delta S'}\ket G \\ &=& - \inner{G;S}{G;S'} \end{eqnarray*}
As a consequence, $\inner{G;S}{G;S'} =0$, so $\{\ket{G;S}\}_{S \subseteq V}$ 
is an orthonormal basis. 
\end{proof}

In the following lemma, it is  shown  that the action of any Pauli operator on a signed graph state can be captured by its sign, up to a global phase. Since quantum state are equivalent up to a global phase, for any $\ket \phi, \ket \psi$, we write $\ket \phi \equiv \ket \psi$ when there exists $\alpha$ such that $\ket \phi = e^{i\alpha}\ket \psi$.

\begin{lemma} 
For any graph $G=(V,E)$, any subset $S \subseteq V$, and any Pauli operator $\op P$, there exists $S'\subseteq V$ such that $\op P\ket {G; S} \equiv \ket {G; S'}$.
\end{lemma}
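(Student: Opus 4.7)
The plan is to reduce to generators and track how each generator moves a signed graph state to another signed graph state, tolerating any global phase that appears.

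First I would normalize the Pauli operator. Any element of the Pauli group on $V$ can be written as $\op P \equiv \op X_A \op Z_B$ for some $A,B\subseteq V$, since scalar factors $\pm 1,\pm i$ are global phases that may be ignored and $\op X_u,\op Z_u$ either commute or anticommute (the latter contributing only a sign). It therefore suffices to find $S'$ such that $\op X_A \op Z_B \ket{G;S} \equiv \ket{G;S'}$.

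Next I would handle the $\op Z_B$ factor. By definition $\ket{G;S}=\op Z_S\ket G$, and $\op Z_u^2=I$, so $\op Z_B\op Z_S=\op Z_{B\symmdiff S}$ exactly, giving $\op Z_B\ket{G;S}=\ket{G;B\symmdiff S}$ with no phase at all. The main step is then to show that $\op X_A\ket G\equiv \ket{G;\odd(A)}$, where $\odd(A)=\{v : |{\cal N}(v)\cap A| \text{ is odd}\}=\bigtriangleup_{u\in A}{\cal N}(u)$. For $|A|=1$, say $A=\{u\}$, the stabilizer relation $\op X_u\op Z_{{\cal N}(u)}\ket G=\ket G$ combined with $\op Z_{{\cal N}(u)}^2=I$ gives $\op X_u\ket G=\op Z_{{\cal N}(u)}\ket G=\ket{G;{\cal N}(u)}$. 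For larger $A$, I would induct: assuming $\op X_{A'}\ket G\equiv\op Z_{\odd(A')}\ket G$ for $A'=A\setminus\{u\}$, apply $\op X_u$ and commute it through $\op Z_{\odd(A')}$, which produces only a sign and therefore an irrelevant global phase, and then use the base case. The resulting exponent of $\op Z$ is $\odd(A')\symmdiff{\cal N}(u)=\odd(A)$.

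Finally I would combine. Using steps two and three,
\[
\op X_A\op Z_B\ket{G;S} = \op X_A\ket{G;B\symmdiff S} = \op X_A\op Z_{B\symmdiff S}\ket G \equiv \op Z_{B\symmdiff S}\op X_A\ket G \equiv \op Z_{B\symmdiff S}\op Z_{\odd(A)}\ket G = \ket{G;S'},
\]
where the two $\equiv$'s absorb the signs produced by commuting $\op X_A$ past $\op Z_{B\symmdiff S}$ and by the inductive identification of $\op X_A\ket G$, and
\[
S' \;=\; S\,\symmdiff\, B\,\symmdiff\, \odd(A).
\]

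The only real bookkeeping obstacle is the sign accumulated when commuting products of $\op X$'s past products of $\op Z$'s, but since the lemma is stated up to a global phase via $\equiv$, every such sign is harmless; the content of the lemma is really just the combinatorial identity giving $S'$ explicitly.
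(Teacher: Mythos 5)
Your proposal is correct and follows essentially the same route as the paper: decompose $\op P$ up to phase as $\op X_A\op Z_B$, absorb $\op Z_B$ into the sign by symmetric difference, and use the stabilizer relations $\op X_u\op Z_{{\cal N}(u)}\ket G=\ket G$ to convert $\op X_A$ into $\op Z_{\odd(A)}$ up to a global phase. The only cosmetic difference is that you establish $\op X_A\ket G\equiv\ket{G;\odd(A)}$ by induction on $|A|$, while the paper inserts the whole product of stabilizers in one step; your explicit formula $S'=S\,\symmdiff\,B\,\symmdiff\,\odd(A)$ is a nice bonus but the argument is the same.
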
 
\begin{proof}
$\op P$ can be decomposed, up to a global phase, into a product of $\op X$ and $\op Z$ operators: $\exists S_1,S_{2} \subseteq V,$ and $d\in [0,3]$ such that  $\op P=i^d\op X_{S_{1}}\op Z_{S_{2}}$. Thus $\op P\ket{G;S} = i^d \op X_{S_{1}} \op Z_{S_2\Delta S}\ket G = i^d X_{S_{1}}\op Z_{S_2\Delta S} (\Pi_{u\in S_{1}}\op X_{u}\op Z_{{\cal{N}}(u)}\ket G) = i^{d'}\op Z_{S_{2}\Delta S}\Pi_{u\in S_{1}} \op Z_{{\cal{N}}(u)}\ket G$, where $\Delta$ is the symmetric difference.
\end{proof}

All signed graph states represent distinct quantum states:

\begin{lemma}\label{lem:distinct}
For any graphs $G_{1}, G_{2}$ and any signs $S_{1}, S_{2}$, \begin{equation*} \ket{G_{1};S_{1}}\equiv\ket{G_{2};S_{2}} \implies G_{1}=G_{2} \text{ ~and~ } S_{1}=S_{2} \end{equation*}
\end{lemma}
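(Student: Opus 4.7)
The plan is to derive an explicit computational-basis expansion of $\ket{G;S}$ and then read off both $G$ and $S$ directly from its coefficients. Throughout I assume $V(G_1)=V(G_2)=V$, since otherwise the two states do not even live in the same Hilbert space. First I would establish the standard formula
\[
\ket G \;=\; \frac{1}{\sqrt{2^n}} \sum_{x \in \{0,1\}^V} (-1)^{q_G(x)} \ket x, \qquad q_G(x) \;:=\; \sum_{uv \in E(G)} x_u x_v ,
\]
by verifying that the right-hand side satisfies the defining equations $\op X_u \op Z_{{\cal N}(u)} \ket G = \ket G$. The key $\gftwo$-identity is $q_G(x\oplus e_u) \equiv q_G(x) + \sum_{v \in {\cal N}(u)} x_v \pmod 2$, obtained by expanding $(x_u\oplus 1)x_v$ on each edge incident to $u$; combined with the normalization convention $\bra{0^V}G\rangle = 1/\sqrt{2^n}$, this pins the formula down completely. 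Multiplying by $\op Z_S$ then gives
\[
\ket{G;S} \;=\; \frac{1}{\sqrt{2^n}} \sum_{x} (-1)^{q_G(x) + \sum_{u\in S} x_u}\,\ket x .
\]

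Next I would use a trivial phase-fixing observation. Since $\op Z_S \ket{0^V} = \ket{0^V}$, the coefficient of $\ket{0^V}$ in $\ket{G;S}$ is the positive real number $1/\sqrt{2^n}$, independently of $G$ and $S$. Therefore any assumed equivalence $\ket{G_1;S_1}\equiv\ket{G_2;S_2}$ forces the global phase to be $+1$, so the two vectors are literally equal. Identifying coefficients on each $\ket x$ yields the $\gftwo$-identity
\[
q_{G_1}(x) + \sum_{u\in S_1} x_u \;\equiv\; q_{G_2}(x) + \sum_{u\in S_2} x_u \pmod 2 \qquad \text{for every } x \in \{0,1\}^V .
\]

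Finally I would extract $S$ and then $G$ by evaluating this identity at well-chosen inputs. Plugging in $x = e_u$ (a single-vertex indicator) kills both quadratic forms (a simple graph has no self-loops), reducing the identity to $[u \in S_1] = [u \in S_2]$; ranging over $u \in V$ gives $S_1 = S_2$. Plugging in $x = e_u + e_v$ for $u \neq v$ and using $S_1 = S_2$ to cancel the linear parts, the sole surviving cross-term is $x_u x_v$ on each side, yielding $[uv \in E(G_1)] = [uv \in E(G_2)]$; ranging over pairs gives $G_1 = G_2$. The only step involving any real computation is the initial derivation of the computational-basis formula for $\ket G$, and this is a short direct check from the stabilizer equations; the remainder is pure $\gftwo$-bookkeeping on the exponents, so I do not anticipate a serious obstacle.
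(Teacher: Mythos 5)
Your proof is correct, but it takes a genuinely different route from the paper's. The paper stays entirely inside the stabilizer formalism: setting $S=S_1\Delta S_2$, it pushes the stabilizer generator $\op X_u\op Z_{\mathcal{N}_{G_1}(u)}$ of $\ket{G_1}$ through $\op Z_S$ to deduce that $\ket{G_2}$ is fixed, up to phase, by $\op Z_{\mathcal{N}_{G_1}(u)\Delta\mathcal{N}_{G_2}(u)}$, and then invokes Proposition~\ref{prop:basis} (orthonormality of $\{\ket{G;S}\}_S$, itself proved by an anticommutation argument) to force $\mathcal{N}_{G_1}(u)=\mathcal{N}_{G_2}(u)$ for every $u$ and then $S=\emptyset$. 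You instead unfold the state in the computational basis via the quadratic form $q_G(x)=\sum_{uv\in E}x_ux_v$, kill the global phase by inspecting the coefficient of $\ket{0^V}$ (which is $2^{-n/2}>0$ for every signed graph state), and polarize the resulting identity $q_{G_1}(x)+\sum_{u\in S_1}x_u\equiv q_{G_2}(x)+\sum_{u\in S_2}x_u \pmod 2$ at weight-one and weight-two inputs to read off $S$ and $E(G)$ directly; all the individual steps (the $\bmod~2$ identity $q_G(x\oplus e_u)\equiv q_G(x)+\sum_{v\in\mathcal{N}(u)}x_v$, the vanishing of $q_G$ on singletons, the surviving cross-term $[uv\in E]$ on pairs) check out. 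Your approach is more elementary and self-contained --- it does not rely on Proposition~\ref{prop:basis} and in fact reproves it, since distinct exponent functions give orthogonal $\pm1$ coefficient vectors --- at the modest cost of first verifying the amplitude formula for $\ket G$ against the defining equations and the normalization $\bra{0^V}G\rangle=2^{-n/2}$; note that it is the uniqueness clause in the paper's definition of $\ket G$ that lets you conclude your candidate sum really is $\ket G$ once those two checks are done. The paper's argument, by contrast, is the one that generalizes to stabilizer-state settings where no explicit amplitude formula is at hand.
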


\begin{proof} Let $S=S_{1}\Delta S_{2}$. $ \ket{G_{1};S_{1}}\equiv \ket{G_{2};S_{2}}$ implies $ \ket{G_{1};S}\equiv \ket{G_{2}}$. For any vertex $u$, since $\op Z_{S} \op X_{u}\op Z_{{\cal{N}}_{G_{1}}(u)} \ket {G_{1}}\equiv \op X_{u}\op Z_{{\cal{N}}_{G_{2}}(u)} \ket {G_{2}}$, $\ket{G_{1}; S} \equiv  \ket {G_{2};{\cal{N}}_{G_{1}(u)}\Delta {\cal{N}}_{G_{2}(u)}}$. Thus, $\ket {G_{2}} \equiv \ket {G_{2};{\cal{N}}_{G_{1}(u)}\Delta {\cal{N}}_{G_{2}(u)}}$. By proposition \ref{prop:basis}, ${\cal{N}}_{G_{1}(u)}\Delta {\cal{N}}_{G_{2}(u)}=\emptyset$. Thus, $G_{1}=G_{2}$ since for any $u$, ${\cal{N}}_{G_{1}}(u) =  {\cal{N}}_{G_{2}}(u)$. As a consequence,  $ \ket{G_{1};S}\equiv \ket{G_{1}}$, so by proposition \ref{prop:basis}, $S=\emptyset$ so $S_{1}=S_{2}$.
\end{proof}

\section{Combinatorial properties of Graph States} 

The success of the graph state formalism is mainly due to the ability to characterize quantum properties by means of graph theoretic ones. For instance, local Clifford equivalence has been characterized by local complementation \cite{VdN04}. Here, we recall that the action of local Clifford transformations
 can be characterized by local complementations, and we prove that the action of local real Clifford transformations can be characterized by pivoting. 
 

 A Clifford transformation $\op C$ is a map which transforms Pauli  operators to Pauli operators: for any Pauli $\op P$, $\op {CPC}^\dagger$ is a Pauli operator. A local Clifford is a Clifford that can be decomposed into the tensor product of one-qubit unitaries: $\op C_1\otimes \ldots \otimes \op C_n$. The group of local Clifford on $V$ is generated by $\{ \sqrt \op Z_u, \sqrt \op X_u\}_{u\in V}$ where $\sqrt {\op P}_u := \frac {e^{i\pi/4}}{\sqrt{2}}(I-i\op P_u)$. For any subset $S$ of qubits, let $\sqrt {\op P}_S := \prod_{u\in S} \sqrt {\op P}_u$. 

A particularly important Clifford operation is the so-called Hadamard transformation: for any $u\in V$ 
$\op H$ acts as the identity on $V\setminus \{u\}$ and as $\op H : \ket x \mapsto\frac{\ket 0+(-1)^x\ket 1}{\sqrt 2}$ on $u$. 
Moreover, for any $S\subseteq V$,  let $\op H_S = \prod_{u\in S} \op H_u$.

$\op H$ together with $\op Z$ generates a subgroup of the Clifford operations, which corresponds exactly to the local \emph{real} Clifford operations, i.e. those that can be represented by matrices with real entries in the computational basis $\{\ket 0, \ket 1\}$.

Some quantum transformations on graph states can be interpreted in terms of graph transformations. In particular, the application of some Clifford transformations can be interpreted in terms of local complementions  \cite{VdN04}: 

\begin{proposition}
For any graph $G=(V,E)$, for any vertex $u\in V$, $$\sqrt \op X^\dagger_{u}\sqrt \op Z_{{\cal{N}}(u)} \ket G = \ket{G*u}$$
\end{proposition}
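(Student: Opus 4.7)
The plan is to verify the equality by matching stabilizers. Recall that $\ket G$ is defined as the unique state (up to global phase) fixed by the operators $K_v := \op X_v \op Z_{{\cal{N}}_G(v)}$ for all $v \in V$, and similarly $\ket{G*u}$ is uniquely fixed by $K'_v := \op X_v \op Z_{{\cal{N}}_{G*u}(v)}$. Setting $U := \sqrt{\op X}^\dagger_u \sqrt{\op Z}_{{\cal{N}}(u)}$ and $\ket\phi := U\ket G$, it therefore suffices to show that $K'_v \ket\phi = \ket\phi$ for every $v$, or equivalently that each conjugate $U K_v U^\dagger$ lies in the stabilizer group of $\ket{G*u}$.

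First I would set up a Pauli conjugation table for $U$. Using the one-qubit identities $\sqrt{\op Z}\,\op X\,\sqrt{\op Z}^\dagger = \op Y$ and $\sqrt{\op X}^\dagger\,\op Z\,\sqrt{\op X} = \op Y$ (each a short $2\times 2$ calculation using the definition $\sqrt{\op P} = \frac{e^{i\pi/4}}{\sqrt 2}(I - i\op P)$), and writing $N := {\cal{N}}_G(u)$, the operator $U$ acts by conjugation as follows: it fixes $\op X_u$, sends $\op Z_u$ to $\op Y_u = i\op X_u \op Z_u$; for $w \in N$ it sends $\op X_w$ to $\op Y_w = i\op X_w\op Z_w$ and fixes $\op Z_w$; and it commutes with every Pauli supported on qubits outside $N \cup \{u\}$.

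Plugging this table into $K_v$ and splitting into three cases, together with the neighborhood formulas ${\cal{N}}_{G*u}(u) = N$, ${\cal{N}}_{G*u}(v) = {\cal{N}}_G(v) \Delta (N \setminus \{v\})$ for $v \in N$, and ${\cal{N}}_{G*u}(v) = {\cal{N}}_G(v)$ otherwise, yields
\[
U K_u U^\dagger = K'_u, \qquad U K_v U^\dagger = K'_v \;\; (v \notin N\cup\{u\}), \qquad U K_v U^\dagger = K'_u K'_v \;\; (v \in N).
\]
In each case $U K_v U^\dagger$ is a stabilizer of $\ket{G*u}$, which is exactly the desired condition $K'_v \ket\phi = \ket\phi$, and hence $\ket\phi \equiv \ket{G*u}$.

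The main delicate point is the sign in the case $v \in N$: the two factors of $i$ arising from $\op Y_u$ (from conjugating $\op Z_u$, using $u \in {\cal{N}}_G(v)$) and $\op Y_v$ (from conjugating $\op X_v$) produce $i^2 = -1$, which must be cancelled exactly by the commutation sign $\op Z_N \op X_v = -\op X_v \op Z_N$ (valid because $v \in N$) in order to get $+K'_u K'_v$ rather than $-K'_u K'_v$. The symmetric-difference bookkeeping on the $\op Z$ supports is then straightforward: using $u \in {\cal{N}}_G(v)$, $v \notin {\cal{N}}_G(v)$, and $v \in N$, one checks that $({\cal{N}}_G(v) \setminus \{u\}) \cup \{v\} = N \Delta {\cal{N}}_{G*u}(v)$. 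Equality on the nose (rather than up to phase) then follows from the global-phase convention $\bra{0^n}G\rangle = 1/\sqrt{2^n}$ fixed for graph states, together with an elementary check that $U$ preserves this normalization.
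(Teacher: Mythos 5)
The paper itself gives no proof of this proposition (it is imported from Van den Nest et al.\ \cite{VdN04}), so there is no in-paper argument to compare against; your stabilizer-conjugation proof is the standard route and it is essentially correct. The conjugation table is right under the paper's convention $\sqrt{\op P}=\frac{e^{i\pi/4}}{\sqrt 2}(I-i\op P)$, the three cases come out as you state, and the delicate $i^2$-versus-anticommutation sign in the case $v\in\mathcal N(u)$ does cancel to give $+K'_uK'_v$. One bookkeeping slip in the write-up: the set identity should be $\mathcal N_G(v)\cup\{v\}=\mathcal N_G(u)\,\Delta\,\mathcal N_{G*u}(v)$; the vertex $u$ belongs to both sides (on the left it enters through the $\op Z_u$ factor of $\op Y_u=i\op X_u\op Z_u$, and on the right $u\in\mathcal N_{G*u}(v)\setminus\mathcal N_G(u)$), whereas the identity as you wrote it, $(\mathcal N_G(v)\setminus\{u\})\cup\{v\}=\mathcal N_G(u)\,\Delta\,\mathcal N_{G*u}(v)$, omits $u$ from the left-hand side. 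This does not affect the conclusion $UK_vU^\dagger=K'_uK'_v$, which is correct as stated. Your final normalization remark also checks out and is worth making explicit, since the statement asserts equality rather than equality up to phase: $\bra{0^n}U\ket G=\frac{e^{-i\pi/4}}{\sqrt2}\left(\bra{0^n}+i\bra{1_u 0_{V\setminus\{u\}}}\right)\ket G=\frac{e^{-i\pi/4}(1+i)}{\sqrt 2}\cdot\frac{1}{\sqrt{2^n}}=\frac{1}{\sqrt{2^n}}=\bra{0^n}{G*u}\rangle$, using that the amplitude of $\ket G$ on a weight-one basis state is $+2^{-n/2}$ because $G$ has no self-loops.
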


Such Clifford operators act on signed graph states as follows:

$$\sqrt \op X^{\dagger}_{u} \sqrt \op Z_{{\cal{N}}(u)} \ket {G;S}=\begin{cases}\ket {G*u;S}& \text{ if $u\notin S$} \\ \ket {G*u;S\Delta {\cal{N}}(u)}& \text{ if $u\in S$} \\ \end{cases}$$

Moreover Van den Nest proved that the action of local Clifford transformations is characterized by local complementation:
\begin{lemma}[\cite{VdN04}]
For any graphs $G$ and $G'$, there exists a local Clifford transformation $\op C$ such that $\op C \ket G = \ket {G'}$ if and only if $G$ and $G'$ are locally equivalent, i.e. there exists a sequence of local complementation transforming $G$ in $G'$. 
\end{lemma}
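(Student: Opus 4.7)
The \emph{if} direction is the short one. If $G$ and $G'$ are locally equivalent, pick any sequence of local complementations $u_1,\ldots,u_k$ transforming $G$ into $G'$ and let $G^{(i)} = G * u_1 * \cdots * u_i$. By the proposition immediately preceding the lemma, the local Clifford $\op C_i := \sqrt{\op X}^\dagger_{u_i}\sqrt{\op Z}_{{\cal N}_{G^{(i-1)}}(u_i)}$ satisfies $\op C_i \ket{G^{(i-1)}} = \ket{G^{(i)}}$, so the composition $\op C = \op C_k \cdots \op C_1$ is a local Clifford with $\op C\ket G = \ket{G'}$.

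For the \emph{only if} direction, the plan is to pass to the stabilizer formalism. The state $\ket G$ is uniquely determined (up to global phase) as the $+1$ eigenstate of the stabilizer group $\mathcal S(G) = \langle \op X_u \op Z_{{\cal N}(u)} \mid u\in V\rangle$, and similarly for $\ket{G'}$. Hence $\op C\ket G \equiv \ket{G'}$ forces $\op C \mathcal S(G) \op C^\dagger = \mathcal S(G')$. Writing $\op C = \op C_1\otimes\cdots\otimes \op C_n$ with each $\op C_u$ a single-qubit Clifford, the action of $\op C$ on the Pauli group on qubit $u$ reduces modulo Paulis to one of the six permutations of $\{\op X_u,\op Y_u,\op Z_u\}$ (i.e.\ an element of $\mathrm{Sp}(2,\gftwo)$). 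Identifying each Pauli operator $i^a \op X_S \op Z_T$ with its binary symplectic vector $(S,T)\in \gftwo^{2n}$, the generators of $\mathcal S(G)$ form the rows of the matrix $[\,I \mid A_G\,]$, where $A_G$ is the adjacency matrix of $G$.

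The substantive step is then purely combinatorial: show that the set of adjacency matrices $A_{G'}$ for which there exists a local symplectic transformation $\bigoplus_u M_u$ (with $M_u\in\mathrm{Sp}(2,\gftwo)$) sending the row space of $[\,I \mid A_G\,]$ to that of $[\,I \mid A_{G'}\,]$ is exactly the local-complementation orbit of $A_G$. The plan is to decompose each $M_u$ into a product of generators of $\mathrm{Sp}(2,\gftwo)$ (say the transpositions swapping $\op X\leftrightarrow\op Z$, $\op X\leftrightarrow\op Y$, $\op Y\leftrightarrow\op Z$) and analyze their effect on the bordered matrix $[\,I\mid A_G\,]$, using row reductions to keep the left block equal to $I$. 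The swap $\op X\leftrightarrow\op Y$ on qubit $u$, combined with the ensuing row operations needed to restore $[\,I\mid\cdot\,]$ form, will turn out to implement precisely a local complementation at $u$; swaps involving $\op Z$ will be shown to either reduce to such operations or to violate the invertibility of the left block (which must be restored because $\mathcal S(G')$ contains a unique representative with $\op X$-support equal to $\{u\}$ for each $u$).

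The main obstacle is this last classification: making the case analysis rigorous and showing there are no "exceptional" local symplectic operations that would produce a graph state outside the LC-orbit. Concretely, the hard point is verifying that whenever a product $\bigoplus_u M_u$ acts on $[\,I\mid A_G\,]$ so that the result still row-reduces to the canonical form $[\,I\mid A_{G'}\,]$ of some graph state, the $M_u$'s must (up to Pauli factors and row operations implementable by local complementations) lie in the subgroup generated by the transpositions that correspond to local complementation. Once this classification is in place, the theorem follows by induction on the length of a decomposition of $\op C$ into these generators.
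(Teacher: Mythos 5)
This lemma is not proved in the paper at all: it is imported verbatim from Van den Nest, Dehaene and De Moor \cite{VdN04}, so there is no internal proof to measure your attempt against. Judged on its own terms, your \emph{if} direction is complete and correct --- it is the obvious induction using the proposition $\sqrt{\op X}^\dagger_u\sqrt{\op Z}_{{\cal N}(u)}\ket G=\ket{G*u}$ stated just before the lemma.

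The \emph{only if} direction, however, is a programme rather than a proof, and the step you defer is the entire content of the theorem. The translation of $\op C\ket G\equiv\ket{G'}$ into the statement that a block-diagonal symplectic transformation over $\gftwo$ carries the row space of $[\,I\,|\,A_G\,]$ onto that of $[\,I\,|\,A_{G'}\,]$ is routine bookkeeping; the substance is the classification you yourself flag as ``the main obstacle'', and it cannot be settled generator by generator the way you suggest. Whether a given single-qubit factor $M_u$ is ``admissible'' is a global property of the whole tuple $(M_v)_{v\in V}$ together with $G$: a Hadamard-type swap $\op X\leftrightarrow\op Z$ applied at a single vertex of a connected graph destroys the graph-state form, yet applied simultaneously at the two endpoints of an edge it preserves it (this is exactly the pivoting $\op H_{u,v}\ket G\equiv\ket{G\wedge uv}$ with $G\wedge uv=G*u*v*u$ used throughout the paper), so your dichotomy ``reduces to a local complementation or violates invertibility of the left block'' is false for individual generators and must instead be proved for the product as a whole. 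Similarly, the claim that the $\op X\leftrightarrow\op Y$ swap at $u$ alone implements $G*u$ is not right: it introduces a nonzero diagonal entry in the right-hand block (a $\op Z_u$ factor in the generator indexed by $u$), and the genuine local complementation requires the coordinated swaps at $u$ \emph{and} all its neighbours. The actual argument in \cite{VdN04} proceeds by analysing an invertibility condition on a matrix built from $A_G$ and the diagonal blocks of the symplectic transformation and then running a nontrivial induction; none of that is present here, so the forward implication remains assumed rather than established.
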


In \cite{VdNedge-local-equivalence},  a combinatorial characterizationg of the action of Hadamard transformations on two neighbor qubits has been introduced\footnote{In  \cite{VdNedge-local-equivalence} the characterisation has been given up to Pauli operation.}.
\begin{proposition} 
For any graph $G=(V,E)$ and any edge $(u,v)\in E$, 
$$ \op H_{u,v}\ket G = \op Z_{{\cal{N}}(u)\cap {{\cal{N}}(v)}}\ket {G\wedge uv}$$
\end{proposition}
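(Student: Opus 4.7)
The plan is to reduce the claim to three successive applications of the local-complementation formula for Clifford actions on graph states (the boxed formula just above), using the identity $G \wedge uv = G * u * v * u$. Applying that formula at $u$, then at $v$ in $G{*}u$, then at $u$ in $G{*}u{*}v$ produces a Clifford $\op C$ satisfying $\op C \ket G = \ket{G \wedge uv}$, namely
\begin{equation*}
\op C = \sqrt{\op X}^\dagger_u \sqrt{\op Z}_{{\cal N}_{G*u*v}(u)} \cdot \sqrt{\op X}^\dagger_v \sqrt{\op Z}_{{\cal N}_{G*u}(v)} \cdot \sqrt{\op X}^\dagger_u \sqrt{\op Z}_{{\cal N}_G(u)}.
\end{equation*}
The rest of the argument is a careful simplification of this operator.

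To carry out that simplification, decompose $V \setminus \{u,v\}$ into $A = {\cal N}(u) \setminus ({\cal N}(v) \cup \{v\})$, $B = {\cal N}(v) \setminus ({\cal N}(u) \cup \{u\})$, and $C = {\cal N}(u) \cap {\cal N}(v)$ (so that $u,v \notin A \cup B \cup C$). Tracing how the first two local complementations modify neighborhoods yields ${\cal N}_G(u) = \{v\} \cup A \cup C$, ${\cal N}_{G*u}(v) = \{u\} \cup A \cup B$, and ${\cal N}_{G*u*v}(u) = \{v\} \cup B \cup C$, so each vertex of $A \cup B \cup C$ is hit by exactly two of the three $\sqrt{\op Z}$ factors. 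Since those $\sqrt{\op Z}$'s commute with every operator acting on the disjoint qubits $\{u,v\}$, they can be pulled out and combined via $\sqrt{\op Z}^2 = \op Z$ to give
\begin{equation*}
\op C = \bigl(\sqrt{\op X}^\dagger_u \sqrt{\op Z}_u \sqrt{\op X}^\dagger_u\bigr) \bigl(\sqrt{\op Z}_v \sqrt{\op X}^\dagger_v \sqrt{\op Z}_v\bigr) \op Z_A \op Z_B \op Z_C .
\end{equation*}
A direct $2 \times 2$ check yields $\sqrt{\op X}^\dagger \sqrt{\op Z} \sqrt{\op X}^\dagger = e^{i\pi/4}\, \op X \op H \op X$ and $\sqrt{\op Z} \sqrt{\op X}^\dagger \sqrt{\op Z} = e^{-i\pi/4}\, \op Z \op H \op Z$; the two phases cancel, leaving $\op C = (\op X_u \op H_u \op X_u)(\op Z_v \op H_v \op Z_v) \op Z_A \op Z_B \op Z_C$, a product of pairwise commuting Hermitian factors. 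In particular $\op C = \op C^{-1}$, so $\ket G = \op C \ket{G \wedge uv}$.

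It remains to compute $\op H_u \op H_v \op C$. Using $\op H \op X \op H = \op Z$ and $\op H \op Z \op H = \op X$ together with $\op Z \op X = i \op Y$ and $\op X \op Z = -i \op Y$, the $u$-block collapses to $\op H_u \op X_u \op H_u \op X_u = i \op Y_u$ and the $v$-block to $\op H_v \op Z_v \op H_v \op Z_v = -i \op Y_v$; the two $\pm i$'s cancel and give $\op H_u \op H_v \op C = \op Y_u \op Y_v \op Z_A \op Z_B \op Z_C$. Finally, expand $\op Y_u \op Y_v = -\op X_u \op X_v \op Z_u \op Z_v$ and apply the defining stabilizer equations for $\ket{G \wedge uv}$ at $u$ and $v$, namely $\op X_u \ket{G \wedge uv} = \op Z_{\{v\}\cup B \cup C}\ket{G \wedge uv}$ and $\op X_v \ket{G \wedge uv} = \op Z_{\{u\}\cup A \cup C}\ket{G \wedge uv}$; a short Pauli-algebra manipulation (one $Z$-$X$ anticommutation sign, together with the cancellation of the common $\op Z_C^2$ coming from the overlap of the two stabilizer neighborhoods on $C$) gives $\op Y_u \op Y_v \ket{G \wedge uv} = \op Z_A \op Z_B \ket{G \wedge uv}$. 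Combining with the outer $\op Z_A \op Z_B \op Z_C$ and using $\op Z_A^2 = \op Z_B^2 = \op I$ leaves exactly $\op H_u \op H_v \ket G = \op Z_C \ket{G \wedge uv}$, which is the claim. The main delicate point is the phase bookkeeping in the two single-qubit identities above: only the exact cancellation of $e^{i\pi/4}$ against $e^{-i\pi/4}$ between the $u$- and $v$-blocks produces the clean prefactor $\op Z_C$ rather than a phase-shifted version.
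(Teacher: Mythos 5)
Your proof is correct and follows essentially the same route as the paper's: decompose the pivot as $G*u*v*u$, collapse the resulting local Cliffords on $u$ and $v$ via the identities $\sqrt{\op X}^\dagger\sqrt{\op Z}\sqrt{\op X}^\dagger = e^{i\pi/4}\op X\op H\op X$ and $\sqrt{\op Z}\sqrt{\op X}^\dagger\sqrt{\op Z} = e^{-i\pi/4}\op Z\op H\op Z$, and absorb the leftover Paulis using the stabilizer relations of the graph states. Your write-up merely organizes the final bookkeeping slightly differently (inverting the involution $\op C$ and using both stabilizers of $\ket{G\wedge uv}$, where the paper mixes stabilizers of $\ket G$ and $\ket{G\wedge uv}$), and it makes explicit the neighborhood computations that the paper leaves implicit.
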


\begin{proof} Since the original statement in \cite{VdNedge-local-equivalence} has been introduced without the Pauli factor $\op Z_{{\cal{N}}(u)\cap {\cal{N}}(v)}$, a proof of the proposition is given.  
The proof is based on the facts that $\sqrt\op X_u^\dagger \sqrt \op Z_u \sqrt \op X_u^\dagger = e^{i\pi/4}\op X_u\op H_u \op X_u$ and $\sqrt\op Z_u \sqrt \op X_u^\dagger  \sqrt \op X_u = e^{-i\pi/4}\op Z_u\op H_u \op Z_u$.
$$\begin{array}{rcl}
\ket{G\wedge uv}&=&\ket {G*u*v*u}\\
&=&\op Z_{(\mathcal \cal{N}(u)\cup \mathcal N(v))\Delta \{u,v\}}\sqrt\op X_u^\dagger \sqrt \op Z_u \sqrt \op X_u^\dagger          \sqrt\op Z_v \sqrt \op X_v^\dagger  \sqrt \op X_v  \ket G\\
&=&\op Z_{(\mathcal N(u)\cup \mathcal N(v))\Delta  \{u,v\}} \op X_u \op Z_v \op H_u \op H_v \op X_u \op Z_v \ket G\\
&=&\op X_u \op \op Z_{(\mathcal N(u)\cup \mathcal N (v))\Delta \{u\}} \op H_{u,v} \op Z_{\mathcal N(u)\Delta \{v\} }  \ket G\\
\end{array}$$
Moreover $\ket{G\wedge uv} = \op X_u\op Z_{N_{G\wedge uv}(u)}\ket{G\wedge uv}= \op X_u \op Z_{N_{G}(v)\Delta \{u,v\}}\ket{G\wedge uv}$, so $\ket{G\wedge uv} = \op H_{u,v} \op Z_{\mathcal N_G(u)\cap \mathcal N_G(v)}\ket G$.
\end{proof}

 The action of Hadamard transformations is extented to signed graph states: Given a graph $G=(V,E)$ and an edge $(u,v)\in E$, for any $S\subseteq V\backslash \{u,v\}$: 
$$\begin{array}{lcr}
\op H_{u,v} \ket {G;S}&=&\ket {G\wedge uv;S\Delta (\mathcal N(u)\cap \mathcal N(v))}\\
\op H_{u,v} \ket {G;S\cup \{u\}}&=&\ket {G\wedge uv;S\Delta (\mathcal N(u)\cap \overline{\mathcal N(v)})}\\
\op H_{u,v} \ket {G;S\cup \{v\} }&=&\ket {G\wedge uv;S\Delta (\overline{\mathcal N(u)}\cap \mathcal N(v))}\\
\op H_{u,v} \ket {G;S\cup \{u,v\}}&=&-\ket {G\wedge uv;S\Delta (\mathcal N(u)\cup \mathcal N(v))}
\end{array}$$

Now we prove that  the action of  the local real Clifford transformations on graph states is captured by the graphical transformation of pivoting. 

\begin{lemma}\label{lem:realcliff}
For any real local Clifford transformation $R$, and for any graphs $G_1=(V_1,E_1)$ and $G_2=(V_2,E_2)$, if $R\ket {G_1,S_1}\equiv \ket {G_2,S_2}$ then $\exists A,S_1', S_2'\subseteq V_1$ s.t. $\op H_A\ket {G_1,S_1'}\equiv \ket {G_2,S_2'}$.
\end{lemma}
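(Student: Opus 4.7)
The strategy is to reduce $R$ to a pure product of Hadamards by absorbing its Pauli part into the sign, using the Pauli lemma proved earlier. Since the real local Clifford group is generated by $\op H$ and $\op Z$ (as noted just above), each single-qubit factor $R_u$ of $R$ lies in the dihedral group generated by $\op H_u$ and $\op Z_u$. Using the one-qubit relation $\op Z\op H = \op H\op X$ (equivalently $\op X\op H = \op H\op Z$), one can push every $\op Z$ past every $\op H$ on the same qubit, yielding, up to a global phase, a normal form $R_u = P_u \op H_u^{a_u}$ with $P_u$ a product of $\op X_u$ and $\op Z_u$ and $a_u\in\{0,1\}$. Tensoring the single-qubit factors gives
$$R \;\equiv\; P\,\op H_A, \qquad A = \{u \in V_1 : a_u = 1\}, \qquad P = \bigotimes_{u\in V_1} P_u,$$
where $P$ is a Pauli operator on $V_1$.

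I would then commute $P$ across $\op H_A$ to bring the Hadamards to the left. Since $\op H_u$ conjugates $\op X_u$ into $\op Z_u$ and vice versa while fixing $\op X_v, \op Z_v$ for $v\neq u$, the operator $P' := \op H_A P \op H_A$ is again a Pauli, and $P\,\op H_A = \op H_A\,P'$. Applying the Pauli lemma to $P'$ on $\ket{G_1,S_1}$ produces some $S_1'\subseteq V_1$ with $P'\ket{G_1,S_1}\equiv \ket{G_1,S_1'}$. Chaining the equivalences,
$$R\ket{G_1,S_1} \;\equiv\; \op H_A\,P'\ket{G_1,S_1} \;\equiv\; \op H_A\ket{G_1,S_1'},$$
and combining with the hypothesis $R\ket{G_1,S_1}\equiv \ket{G_2,S_2}$ gives $\op H_A\ket{G_1,S_1'}\equiv \ket{G_2,S_2}$, so the lemma holds with $S_2' := S_2$.

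The only delicate point is the normal-form decomposition $R \equiv P\,\op H_A$: it relies on the fact that the single-qubit real Clifford group is the dihedral group of order $8$ (modulo phase), every element of which admits a representative of the form $(\textup{Pauli})\cdot\op H^{a}$ with $a\in\{0,1\}$. This is a short finite check rather than a real obstacle; once it is in hand, the rest of the argument is purely formal and uses no graph-theoretic reasoning beyond the already established Pauli lemma.
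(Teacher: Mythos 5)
Your proposal is correct and follows essentially the same route as the paper: both reduce $R$ to the normal form (Hadamards on a set $A$) times (a Pauli), and absorb the Pauli into the sign of the signed graph state. The paper simply asserts the decomposition $R\equiv \op H_A\op X_B\op Z_C$ and computes the resulting sign explicitly as $S_1\Delta C\Delta Odd(B)$, whereas you justify the normal form via the dihedral structure of the single-qubit real Clifford group and invoke the earlier Pauli lemma abstractly; these are cosmetic differences.
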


\begin{proof}
Any real local Clifford can be decomposed (up to a global phase) into a product of $\op H$, $\op X$ and $\op Z$. More precisely, for any real local Clifford $R$, $\exists A,B,C$ s.t. $R\equiv \op H_A\op X_B\op Z_C$. So $R\ket {G_1,S_1}\equiv \op H_A\op X_B\op Z_C  \ket {G_1,S_1}\equiv  \op H_A \op X_B \ket {G_1,S_1\Delta C} \equiv \op H_A \ket {G_1,S_1\Delta C \Delta Odd(B)}  \equiv \ket {G_2,S_2}$ where $Odd(B):=\{v ~s.t. |N(v)\cap B|=1Ê\mod 2\}$\footnote{$Odd(B)$ can be alternatively defined inductively as follows: $Odd_G(\emptyset) := \emptyset$, and $Odd_G(\{u_0,\ldots u_n\}) = Odd_G(\{u_0,\ldots, u_{n-1}\})\Delta N_G(u_n)$}.
\end{proof}

\begin{lemma}\label{lem:XZpivot}
For any graphs $G$ and $G'$,  any signs $S$, $S'$, and any real local Clifford transformation $R$, if $R \ket {G;S} = \ket {G';S'}$ then $G$ and $G'$ are pivot equivalent, i.e. there exists a sequence of pivotings transforming $G$ in $G'$.
\end{lemma}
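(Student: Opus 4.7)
The plan is to first apply Lemma~\ref{lem:realcliff} to reduce the problem to the case $R=\op H_A$ for some subset $A\subseteq V$: if $R\ket{G;S}\equiv\ket{G';S'}$, the lemma gives signs $\tilde S,\tilde S'$ with $\op H_A\ket{G;\tilde S}\equiv\ket{G';\tilde S'}$, and only the underlying graphs matter for the conclusion. I would then induct on $|A|$, showing that whenever $\op H_A\ket{G;T}\equiv\ket{G';T'}$, the graph $G'$ is pivot-equivalent to $G$. The base case $|A|=0$ is immediate from Lemma~\ref{lem:distinct}.

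For the inductive step, the crux is the following claim: if $A\neq\emptyset$ and $\op H_A\ket{G;T}$ is a signed graph state, then $G$ admits an edge $(u,v)\in E$ with $\{u,v\}\subseteq A$. Given such an edge, the proposition on the action of $\op H_{u,v}$ on an edge of $G$ gives $\op H_{u,v}\ket{G;T}\equiv\ket{G\wedge uv;T''}$ for some sign $T''$. Then $\op H_{A\setminus\{u,v\}}\ket{G\wedge uv;T''}\equiv\ket{G';T'}$, and the inductive hypothesis (applied with $|A|-2<|A|$) yields that $G'$ is pivot-equivalent to $G\wedge uv$, and hence to $G$.

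The main obstacle is to establish the key claim, which I would prove by contraposition using the stabilizer formalism. Suppose some $u\in A$ satisfies $\mathcal{N}_G(u)\cap A=\emptyset$. The state $\ket{G;T}$ is stabilized by $(-1)^{[u\in T]}\op X_u\op Z_{\mathcal{N}_G(u)}$, and $\op H_A$ (which is self-inverse) swaps $\op X$ and $\op Z$ on each qubit of $A$ and fixes the other qubits. Conjugating the generator therefore shows that $\op H_A\ket{G;T}$ is stabilized by $(-1)^{[u\in T]}\op Z_u\op X_{\mathcal{N}_G(u)\cap A}\op Z_{\mathcal{N}_G(u)\setminus A}$, which under the assumption $\mathcal{N}_G(u)\cap A=\emptyset$ collapses to $\pm\op Z_{\{u\}\cup\mathcal{N}_G(u)}$, a non-trivial purely $\op Z$ operator. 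But no signed graph state admits such a stabilizer: if $\pm\op Z_{T_0}\ket{G';T'}=\ket{G';T'}$ with $T_0\neq\emptyset$, then using $\op Z_{T_0}\ket{G';T'}=\ket{G';T_0\Delta T'}$ we would have $\ket{G';T'}\equiv\ket{G';T_0\Delta T'}$, contradicting Lemma~\ref{lem:distinct}. This forces $A$ to contain an edge of $G$ and closes the induction.
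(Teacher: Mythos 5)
Your proposal is correct and follows essentially the same route as the paper: reduce to $\op H_A$ via Lemma~\ref{lem:realcliff}, show by a stabilizer/contradiction argument that every vertex of $A$ has a neighbour in $A$ (the paper derives the impossible relation $\ket{G';S_2}\equiv \op Z_{\{u\}\cup\mathcal N(u)}\ket{G';S_2}$ exactly as you do), then peel off an edge inside $A$ with $\op H_{u,v}$ and induct on $|A|$. The only cosmetic difference is that you invoke Lemma~\ref{lem:distinct} where the paper cites Proposition~\ref{prop:basis}; both suffice.
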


\begin{proof}
Thanks to Lemma \ref{lem:realcliff}, there exist $A,S_1,S_2$ s.t. $\op H_A\ket {G,S_1} \equiv \ket {G',S_2}$.
 For any $u\in A$, notice that $A\cap \mathcal N(u)\neq \emptyset$. Otherwise, if $A\cap N_G(u)=\emptyset$ then $\ket {G';S_2}\equiv \op H_{A}\op Z_{S_1} \op X_{u}\op Z_{\mathcal N(u)}\ket G\equiv   \op Z_{u}\op Z_{\mathcal N(u)} H_{A}\op Z_{S_1} \ket G\equiv   \op Z_{\{u\}\cup \mathcal N(u)}\ket {G';S_2}$ which contradicts proposition \ref{prop:basis}. Thus, for any $u\in A$, there exists $v\in A\cap \mathcal N(u)$ and a sign $S_3$ such that $\op H_{A}\ket {G;S_1}\equiv \op H_{A\backslash\{u,v\}} \op H_{u,v}\ket {G;S_1}\equiv \op H_{A\backslash \{u,v\}}\ket {G\wedge uv;S_3}\equiv \ket{G';S_2}$. As a consequence, by induction on the size of $A$, $G$ is a pivot equivalent to $G'$.
\end{proof}

\section{Local measurements and Measurement-based quantum computing}
\label{secmeas}

In this section, the action of local measurements over the qubits of a graph states is considered. We give graphical interpretation of such measurements, but also a computational interpretation since the application of local measurement over a graph state is the key ingredient of the one-way model. 

We consider local measurements according to the basis $\{\ket{0^{(\alpha)}}, \ket{1^{(\alpha)}}\}$ parametrised by an angle $\alpha \in [0, 2\pi)$, where  
\begin{eqnarray*}
\ket{0^{(\alpha)}} &:= &\cos(\frac \alpha 2) \ket 0+ \sin (\frac \alpha 2) \ket 1\\
\ket{1^{(\alpha)}}&:= &\sin(\frac \alpha 2) \ket 0- \cos (\frac \alpha 2)\ket 1
 \end{eqnarray*}

This family of measurements is said to be in the $(X,Z)$-plane since all these basis states are actually in the $(X,Z)$-plane in the Bloch sphere representation. 
Moreover,  for any $\alpha$, the observable associated with the measurement in the basis $\{\ket{0^{(\alpha)}} , \ket{1^{(\alpha)}} \}$ is $\cos(\alpha)\op Z+\sin(\alpha)\op X$. 
Notice that when $\alpha=0$ it corresponds to the standard basis measurement $\{\ket 0, \ket 1\}$  also called  $\op Z$-measurement and when $\alpha = \pi/2$ to the diagonal basis measurement $\{ \frac{\ket 0+\ket 1}{\sqrt 2}, \frac{\ket 0-\ket 1}{\sqrt 2}\}$ or $\op X$-measurement.

One-way quantum computation is generally made upon $(X,Y)$-measurements together with $Z$-measurements. We prove that using $(X,Z)$ measurements instead also give rise to a universal model of quantum computing.

A measurement has non deterministic evolution which consists in projecting the state of the measured qubit onto one of the two basis states: $\ket {s^{(\alpha)}}$, $s\in \{0,1\}$ is the classical outcome of the measurement. Thus the action of such a measurement, up a re-normalisation is $\bra {s^{(\alpha)}}$.

The action of a $\op Z$-measurement corresponds to a vertex deletion:
for any graph $G=(V,E)$,  any vertex $u\in V$, any $s  \in \{0,1\}$,

$$\sqrt 2 ~\bra {s_u^{(0)}}  G \rangle= \op Z^s_{ \mathcal N(u)}\ket {G\backslash u}$$

The action of $\op Z$-measurement is extended to signed graph states: for any graph $G=(V,E)$, any sign $S\subseteq V$, any vertex $u\in V$, and any $s\in \{0,1\}$, 

$$\sqrt 2 ~\bra {s_u^{(0)}}  {G;S} \rangle= \ket {G\backslash u; (S\backslash u)\Delta (\mathcal N(u))^s}$$

A single $X$-measurement cannot be interpreted as a graph transformation, since for any connected graph $G$ and any vertex $u$ of $G$, an $X$-measurement of qubit $u$ of $\ket G$ leads to a quantum state which is no more a graph state. However, the application of a pair of $X$-measurements on adjacent qubits can be interpreted as a pivoting followed by vertex deletions:

\begin{proposition}\label{prop:Xpiv}
For any graph $G$, any edge $uv$, and any $r,s \in \{0,1\}$, 

$$2~\bra{s_u^{(\pi/2)}r_v^{(\pi/2)}} G\rangle = \op Z_{S}\ket {G\wedge uv \backslash u\backslash v}$$
where $S= (\mathcal N(u)\cap \mathcal N(v))\Delta (\mathcal N(u)\backslash v)^r \Delta (\mathcal N(v)\backslash u)^s$

\end{proposition}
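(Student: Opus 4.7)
The key observation is that the angle-$\pi/2$ measurement in the $(X,Z)$-plane is simply a Hadamard-conjugated $\op Z$-measurement: the basis vectors satisfy
$$\ket{0^{(\pi/2)}} = \tfrac{1}{\sqrt 2}(\ket 0+\ket 1) = \op H\ket 0, \qquad \ket{1^{(\pi/2)}} = \tfrac{1}{\sqrt 2}(\ket 0-\ket 1) = \op H\ket 1,$$
and hence $\bra{s^{(\pi/2)}} = \bra s \op H$ (since $\op H$ is Hermitian). Thus $\bra{s_u^{(\pi/2)} r_v^{(\pi/2)}}\ket G = \bra{s_u r_v}\op H_u\op H_v\ket G = \bra{s_u r_v}\op H_{u,v}\ket G$, and the claim reduces to applying the Hadamard-pair on neighboring qubits and then two standard $\op Z$-measurements.

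The plan is to first invoke the preceding proposition (valid since $uv\in E$) to rewrite
$$\op H_{u,v}\ket G ~=~ \op Z_{\mathcal N(u)\cap \mathcal N(v)}\ket{G\wedge uv} ~=~ \ket{G\wedge uv;\, \mathcal N(u)\cap \mathcal N(v)},$$
and then to apply the $\op Z$-measurement rule for signed graph states twice, once on qubit $u$ and then on qubit $v$. Each such measurement contributes a $\tfrac{1}{\sqrt 2}$, which together produce the global factor $2$ appearing in the statement. The only arithmetic content is tracking the sign. For this, the required combinatorial inputs are the neighborhoods $\mathcal N_{G\wedge uv}(u) = \{v\}\cup(\mathcal N_G(v)\setminus\{u\})$ and $\mathcal N_{(G\wedge uv)\setminus u}(v) = \mathcal N_G(u)\setminus\{v\}$, both readable from Figure \ref{fg1} (they express the fact that pivoting swaps the roles of $u$ and $v$ outside their common neighborhood). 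Combined with the observation that $u,v\notin \mathcal N(u)\cap \mathcal N(v)$ (no self-loops), the first $\op Z$-measurement produces the sign $(\mathcal N(u)\cap \mathcal N(v))\Delta(\{v\}\cup \mathcal N(v)\setminus\{u\})^s$, and the second deletes $v$ and symmetric-differences in $(\mathcal N(u)\setminus v)^r$, giving exactly $S = (\mathcal N(u)\cap \mathcal N(v))\Delta (\mathcal N(u)\setminus v)^r\Delta(\mathcal N(v)\setminus u)^s$.

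There is no real conceptual obstacle here; the proof is essentially a direct composition of two already-established lemmas. The mildly finicky part is verifying the neighborhood identity for $G\wedge uv$ at the vertices $u$ and $v$, which, if not taken for granted from Figure \ref{fg1}, requires unfolding the three local complementations in $G*u*v*u$. Once that identity is in hand, both the factor of $2$ and the closed form for $S$ fall out mechanically from the signed $\op Z$-measurement rule.
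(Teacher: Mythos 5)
Your proposal is correct and follows essentially the same route as the paper's own proof: rewrite $\bra{s^{(\pi/2)}}$ as $\bra{s^{(0)}}\op H$, invoke the proposition $\op H_{u,v}\ket G=\op Z_{\mathcal N(u)\cap\mathcal N(v)}\ket{G\wedge uv}$, and finish with two $\op Z$-measurements, tracking the signs via the neighborhoods of $u$ and $v$ in $G\wedge uv$. If anything, your bookkeeping of which outcome ($r$ versus $s$) attaches to which neighborhood is more careful than the paper's displayed computation, which contains an apparent typo repeating the exponent $r$.
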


\begin{figure}[h]
   \begin{center}
\includegraphics[width=8cm]{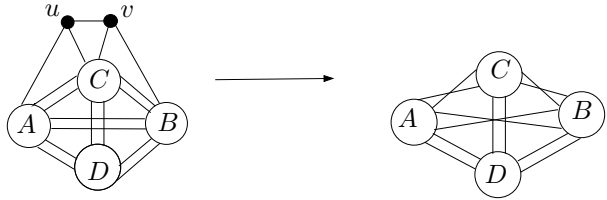}
\end{center}
\caption{\label{fg2}Evolution of a graph state by $X$ measuring $u$ and $v$}
\end{figure}  

\begin{proof}
Since $\bra {s_u^{(\frac\pi2)} }= \bra {s_u^{(0)}}\op H_u$, $
2\bra{s_u^{(\frac\pi2)}r_v^{(\frac\pi2)}} G\rangle = 2\bra{s_u^{(0)}r_v^{(0)}} \op H_{u,v} \ket G$\\$= 2~\bra{s_u^{(0)}r_v^{(0)}} \op Z_{N_G(u)\cap N_G(v)}\ket {G\wedge uv}$\\ $= \op Z_{(N_G(u)\cap N_G(v))\Delta (N_{G\wedge uv}(v)\backslash u)^r\Delta  (N_{G\wedge uv}(v)\backslash v)^r}\ket {G\wedge uv\backslash u\backslash v}$\\ $=\op Z_{(N_G(u)\cap N_G(v))\Delta (N_{G}(u)\backslash v)^r\Delta  (N_{G}(v)\backslash u)^r}\ket {G\wedge uv\backslash u\backslash v}$
\end{proof}

The action of the $X$-measurement of two neighbors extends to the signed graphs: for any graph $G=(V,E)$, any sign $S\subseteq V$, any edge $uv\in E$ and any $r,s\in \{0,1\}$:

$$2~\bra{s_u^{(\pi/2)}r_v^{(\pi/2)}} {G;S}\rangle = \ket {G\wedge uv \backslash u\backslash v; S'}$$
where $S'= (S\backslash u \backslash v)\Delta (\mathcal N(u)\cap \mathcal N(v))\Delta (\mathcal N(u)\backslash v)^{r\oplus |v\cap S|} \Delta (\mathcal N(v)\backslash u)^{s\oplus |u\cap S|}$

\begin{lemma}
For any graphs $G$, $G'$, and any signs $S$, $S'$, if a sequence of $X$- and $Z$-measurements  transforms $\ket {G;S}$ into $\ket {G';S'}$ then $G'$ is a pivot minor of $G$.
\end{lemma}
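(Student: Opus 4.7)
The plan is to track how each individual $X$- or $Z$-measurement transforms a signed graph state, showing by induction on the length of the measurement sequence that after any prefix the current state has the form $R\ket{H;T}$ with $R$ a real local Clifford on the unmeasured qubits and $H$ a pivot minor of $G$; Lemma~\ref{lem:XZpivot} will then close the argument.

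First I would analyse a single measurement applied to a signed graph state $\ket{G;S}$. A $Z$-measurement on a qubit $u$ is already handled by the formula in the text: it produces $\ket{G\backslash u;S'}$ up to normalisation, and $G\backslash u$ is obviously a pivot minor of $G$. For an $X$-measurement on $u$ I split on whether $u$ has a neighbour. If $u$ is isolated, then qubit $u$ is in the state $\ket{\pm}$, the outcome is deterministic, and the leftover state is $\ket{G\backslash u;\,S\backslash\{u\}}$. If $u$ has a neighbour $v$, I use the trivial identity $\op H_u = \op H_v\op H_u\op H_v$ together with the graphical rule for $\op H_{u,v}$ to rewrite
\[
\bra{s_u^{(\pi/2)}} G;S\rangle \;=\; \bra{s_u^{(0)}}\op H_v\op H_u\op H_v\ket{G;S} \;=\; \op H_v\,\bra{s_u^{(0)}} G\wedge uv;S''\rangle
\]
for an explicit $S''$ given by the four-case Hadamard-on-edge rule, and then apply the $Z$-measurement formula to obtain, up to a scalar, $\op H_v\ket{(G\wedge uv)\backslash u;\,S'''}$: a real local Clifford on the remaining qubits times a signed graph state whose underlying graph $(G\wedge uv)\backslash u$ is a pivot minor of $G$.

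Next I run the induction. Assume after $k$ steps the state is $R_k\ket{G_k;S_k}$ with $G_k$ a pivot minor of $G$. To process the $(k{+}1)$-th measurement I commute it past $R_k$: since $R_k$ is a product of $\op H$, $\op X$ and $\op Z$ gates, each factor either swaps the $X$- and $Z$-measurement bases on the target qubit or merely flips the classical outcome; so up to relabelling the outcome, the measurement applied to $R_k\ket{G_k;S_k}$ equals some (other) real local Clifford on the remaining qubits times an $X$- or $Z$-measurement applied directly to $\ket{G_k;S_k}$, which by the single-step analysis has the required form. After all measurements the state is therefore $R\ket{G_f;S_f}$ with $G_f$ a pivot minor of $G$; equating this with $\ket{G';S'}$ and invoking Lemma~\ref{lem:XZpivot} yields that $G_f$ and $G'$ are pivot equivalent, so pushing the extra pivotings through the final deletions (which commute when the pivot edge is intact) shows $G'$ to be a pivot minor of $G$. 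The main obstacle I expect is the single-qubit $X$-measurement, because the graphical rule in the paper only covers the two-qubit Hadamard on an edge; the trick of inserting $\op H_v\op H_v$ on a neighbour and absorbing the stray $\op H_v$ into the accumulated real local Clifford is precisely what bridges this gap.
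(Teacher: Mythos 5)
Your proposal is correct, but it takes a genuinely different route from the paper. The paper first commutes all measurements so that the $Z$-measurements (vertex deletions) happen first, then greedily pairs up \emph{adjacent} $X$-measured vertices and applies Proposition~\ref{prop:Xpiv} to each pair (pivot on the edge, delete both endpoints); the remaining $X$-measured vertices form an independent set, and the crux of the paper's argument is to show, via the eigenvector/orthogonality argument based on Lemma~\ref{lem:distinct}, that each such vertex must in fact be isolated, so its $X$-measurement is just a deletion. You instead never reorder or pair measurements: you process them one at a time, maintain the invariant that the state is $R_k\ket{G_k;S_k}$ with $R_k$ a real local Clifford frame, and handle a lone $X$-measurement on a non-isolated $u$ by the factorisation $\op H_u=\op H_v\op H_{u,v}$, which turns it into a pivot on $uv$, a deletion of $u$, and a residual $\op H_v$ pushed into the frame; the leftover frame is absorbed at the very end by Lemma~\ref{lem:XZpivot}. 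This is legitimate: the orthogonality argument you appear to be skipping is not skipped but delegated to the proof of Lemma~\ref{lem:XZpivot} (which shows $A\cap\mathcal N(u)\neq\emptyset$ via Proposition~\ref{prop:basis}), a result the paper has already established. What your approach buys is uniformity --- the awkward case of an $X$-measured vertex whose neighbours are never measured needs no separate treatment --- at the cost of carrying a Clifford frame and of a slightly less explicit bookkeeping of the resulting signs; the paper's version stays entirely at the level of signed graph states and reads off the sign at every step. Two small points to tighten: the claim that each $\op X$ or $\op Z$ factor "merely flips the classical outcome" should read "flips the outcome or contributes a phase" (e.g.\ $\bra{s^{(\pi/2)}}\op X=(-1)^s\bra{s^{(\pi/2)}}$); and your final normalisation step --- pushing all pivotings before all deletions to match the definition of pivot minor --- relies on the identity $(G\wedge uv)\setminus w=(G\setminus w)\wedge uv$ for $w\notin\{u,v\}$, which is true and is also used implicitly by the paper, but deserves a sentence.
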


\begin{proof}
Since all measurements are commuting because they are local, we assume w.l.o.g.  that all the $Z$-measurements are performed first, leading to a signed graph state $\ket {G\backslash A,  S_1}$, where $A$ is the set of the qubits which are $Z$-measured. Then, inductively,  if a pair of 2 neighbors $uv$ are $X$-measured then by proposition \ref{prop:Xpiv}, these two measurements lead to the sign graph state $\ket {((G\backslash A)\wedge uv)\backslash \{u,v\},  S_2}$. We repeat this step inductively until there is no two neighbors which are $X$-measured. So it leads to a graph state $\ket {\tilde G,S_3}$ such that  $\tilde G$ is a pivot minor of $G$ and such  that  the remaining $X$-measurements have to be performed on an independent set $B$ of $\tilde G$. If $B$ is empty then $G'=\tilde G$ so $G'$ is a pivot minor of $G$. Otherwise, let $u\in B$. The state $\ket \phi$ after the $X$-measurement of $u$ is $\sqrt 2. \bra {s^{(\pi/2)}_u}{\tilde G; S_3}\rangle$ with $s\in \{0,1\}$. $\ket \phi$ is an eigenvector of $\op Z_{\mathcal N_{\tilde G}(u)}$, indeed $\op Z_{\mathcal N_{\tilde G}(u)}(u)  \bra {s^{(\pi/2)}_u} {\tilde G; S_3}\rangle = (-1)^s \op Z_{\mathcal N_{\tilde G}(u)}(u) \bra {s^{(\pi/2)}_u} \op X_u \op Z_{\mathcal N_{\tilde G}(u)}|{\tilde G; S_3}\rangle = (-1)^s  \bra {s^{(\pi/2)}_u} {\tilde G; S_3}\rangle$. Moreover, since none of the neighbors of $u$ are measured, the final state $\ket {G',S'}$ is also a eigenvector of $\op Z_{\mathcal N_{\tilde G}(u)}$, so according to lemma \ref{lem:distinct} $\mathcal N_{\tilde G}(u)=\emptyset$ and $u$ is isolated. For an isolated vertex $u$, one can easily show that a $X$-measurement of $u$  produces the state $\ket{\tilde G\backslash u, S_3}$. By induction on the independent set $B$, it comes that $G'$ is a pivot minor of $G$.
\end{proof}

\section{From Quantum computation to graph theory} \label{sec:pivotplanar}

\subsection{Embedding in a planar graph.}

The main result of this section is that any graph is pivot minor of a planar graph.
The proof consists in encoding the preparation of $\ket G$, for any $G$,  with a measurement-calculus pattern \cite{DKP07,DKPP09} (the language used for formal description of one-way quantum computation) composed of $X$-measurements only. 

First we recall some properties of the measurement-based  model:

 An open graph is a triplet $(G,I,O)$ where $G=(V,E)$ is a graph and $I,O\subseteq V$ are representing respectively the input and the output qubits of the computation. Intuitively, the input qubits are initialised in a state which is the state of the input of the computation and then they are entangled with the rest of the qubits. Then, the non output qubits are measured and finally, at the end of the computation, the output of the computation is located on the remaining qubits, i.e. the output qubits. For a given open graph $(G,I,O)$ the corresponding entangling operation $\op M_{(G,I,O)}$ is:

$$\op M_{(G,I,O)} :  \ket x \mapsto  \frac 1{\sqrt{2^{|I|}}}\sum_{S\subseteq I}(-1)^{x\bullet \den S}\ket {G;S}$$
 where $\den S$ is a binary $n$-bit vector indexed by $V$ such that $\den S(u)=1 \iff u\in S$, and $\forall x, y \in \{0,1\}^V$, $x\bullet y = \sum_{u\in V}x_uy_u$. \\
When the input state is the uniform superposition $\ket \phi = \frac 1{\sqrt{2^{|I|}}}\sum_{x\in \{0,1\}^I} \ket x$, the graph state $\ket G$ is prepared, indeed
$\bra G \op M_{(G,I,O)}\ket \phi =\frac 1{{2^{|I|}}}\sum_{S\subseteq I, x\in \{0,1\}^I}(-1)^{x\bullet \den S} \inner G {G;S} =\frac 1{{2^{|I|}}}\sum_{x\in \{0,1\}^I}(-1)^{x\bullet \den \emptyset} = 1$. 

\paragraph{} Open graphs can be composed: the composition $(G_2=(V_2,E_2),A,O)\circ (G_1=(V_1,E_1),I,A) = (G'=(V_1\cup V_2,E_1 \Delta E_2),I,O)$, 
is well defined if $V_1\cap V_2 = A$. 

In the following, we consider that the input qubits are always $\op X$-measured. The other qubits are measured in the $(X,Z)$-plane.

In order to simulate a unitary tranformation using measurements a correction strategy needs to be given to guarantee that the result of the simulation does not depend on the various classical outcomes abtained during the simulation. This strategy consists in applying the unitary maps Pauli $\op X$ or $\op Z$  on unmeasured qubits when the outcome of a measurement is $1$ (see \cite{MP08-icalp} for a detailed correction strategy). These corrections are applied in such a way that the state of the system after the correction is the same state  as the state of the system if the classical outcome $0$ would have occured. 
One of the simplest example of MBQC consist in the graph $P_2$ 
 composed of two vertices $1$ and $2$ connected by an edge. Notice that the pivoting according to the edge $(1,2)$ does not change the state: $P_2 \wedge 12 = P_2$, so $\op H_{1,2} \ket {P_2} = \ket {P_2}$. 

\begin{lemma}\label{lem:H}
The open graph $(P_n, \{1\},\{n\})$ where the qubits $1\ldots n-1 $ are  $X$-measured simulates the Hadamard transformation if $n$ is even and the identity if $n$ is odd.
 \end{lemma}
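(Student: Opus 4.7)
The plan is to induct on $n$, showing that the unitary $U_n$ simulated by the pattern $(P_n,\{1\},\{n\})$ satisfies the recurrence $U_n = U_{n-1}\op H$ with base $U_1 = I$; this immediately gives $U_n = \op H^{n-1}$, which equals $\op H$ when $n$ is even and $I$ when $n$ is odd. The base case $n=1$ is trivial: no measurement is performed and the input qubit coincides with the output qubit. For $n=2$, I would perform a direct computation, writing $\ket{P_2}=\tfrac{1}{\sqrt 2}(\ket 0_1\ket +_2 + \ket 1_1\ket -_2)$ and expanding $\op M_{(P_2,\{1\},\{2\})}\ket\phi = \op{CZ}_{12}(\ket\phi_1\otimes\ket +_2) = \alpha\ket 0_1\ket +_2 + \beta\ket 1_1\ket -_2$ for $\ket\phi = \alpha\ket 0+\beta\ket 1$. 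Projecting qubit $1$ onto $\bra{s_1^{(\pi/2)}}$ then yields $\tfrac{1}{\sqrt 2}\op X^{s}\op H\ket\phi$ on qubit $2$, confirming $U_2=\op H$ up to the Pauli byproduct $\op X^{s}$.

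For the inductive step $n\geq 3$, the key observation is that the entangler of $P_n$ factorises as $\op{CZ}_{12}$ composed with the entangler of the subchain on qubits $2,\ldots,n$. Applied to $\ket\phi_1\otimes\ket +^{\otimes n-1}$ this produces $\alpha\ket 0_1\otimes\ket{P_{n-1}}_{2,\ldots,n} + \beta\ket 1_1\otimes\ket{P_{n-1};\{2\}}_{2,\ldots,n}$. Measuring qubit $1$ in the $X$-basis with outcome $s$ then leaves, up to normalisation and an $\op X^{s}$ correction localised on qubit $2$, the state $\alpha\ket{P_{n-1}}+\beta\ket{P_{n-1};\{2\}}$, which the defining expansion of $\op M$ identifies with $\op M_{(P_{n-1},\{2\},\{n\})}(\op H\ket\phi)$. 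The remaining $n-2$ measurements therefore execute the shorter pattern on input $\op H\ket\phi$ and, by the induction hypothesis, produce $U_{n-1}\op H\ket\phi$ on qubit $n$, yielding $U_n = U_{n-1}\op H$ and closing the induction.

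The main technical care required is the bookkeeping of Pauli byproducts: every $X$-measurement with outcome $1$ introduces an $\op X$ or, after propagation past later $\op{CZ}$s, an $\op Z$ factor on a still-unmeasured qubit. Such factors commute past subsequent $X$-measurements at the cost of flipping the classical outcomes, so by the standard one-way correction strategy \cite{MP08-icalp} they accumulate into a single Pauli correction on the output qubit without perturbing the deterministic unitary $U_n$. I do not expect a deep obstacle; the only genuinely content-bearing calculation is the $n=2$ teleportation identity that seeds the induction, after which the rest is the concatenation of these one-step teleportations along the chain.
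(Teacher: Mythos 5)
Your proposal is correct and follows essentially the same route as the paper: the content-bearing step in both is the explicit $n=2$ teleportation computation showing that $(P_2,\{1\},\{2\})$ with qubit $1$ $X$-measured implements $\op X^{s}\op H$, after which the paper invokes composition of these elementary open graphs together with $\op H^2=I$, which is exactly what your induction $U_n=U_{n-1}\op H$ makes explicit. Your version merely spells out the composition step (factorising the entangler and re-identifying the residual state as $\op M_{(P_{n-1},\{2\},\{n\})}(\op H\ket\phi)$) and the byproduct bookkeeping in more detail than the paper does.
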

\begin{proof}
\begin{eqnarray*}
\sqrt 2~ \bra {r^{(\pi/2)}}\op M_{(P_2,\{1\},\{2\})}\ket x&=&\sqrt 2~ \bra {r^{(\pi/2)}} \frac 1{\sqrt{2^{|\{1\}|}}}\sum_{S\subseteq \{1\}}(-1)^{x\bullet \den S}\ket {P_2;S}\\
&=&\bra {r_1^{(\pi/2)}} {P_2}\rangle+(-1)^x\bra {r_1^{(\pi/2)}} \op Z_1\ket {P_2}\\
&=&\bra {r_1^{(\pi/2)}} {P_2}\rangle+(-1)^x\bra {r_1^{(\pi/2)}} \op X_2\ket {P_2}\\
&=&(\op I+(-1)^x\op X_2)\bra {r_1^{(\pi/2)}} {P_2}\rangle\\
&=&(\op I+(-1)^x\op X_2)\bra {r_1^{(\pi/2)}} \op H_{1,2}\ket {P_2}\\
&=&(\op I+(-1)^x\op X_2)\op H_2\bra {r_1^{(0)}} {P_2}\rangle\\
&=&\op H_2(\op I+(-1)^x\op Z_2)\op Z^r_2 \frac{\ket {0_2}+\ket {1_2}}{ 2} \\
&=&\op X^r_2\op H_{2}\ket x\\
\end{eqnarray*}
So it simulates $\op H$ up to a Pauli operator $\op X$ which depends on the classical outcome of the measurement.

By composing this elementary open graph state, it comes that any path of size $n$ where all but the output qubit is X-measured, implements $H$ if the number of edges is odd and the identity otherwise, since $H^2=I$. 
\end{proof}

\begin{lemma}\label{lem:CZ}
The open graph $(P_2,\{1,2\},\{1,2\})$ with no measurement, simulates the two-qubit unitary transformation $\Lambda Z$:  $\ket {x,y} \mapsto (-1)^{xy} \ket {x,y}$
\end{lemma}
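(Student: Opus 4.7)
My plan is to compute $\op M_{(P_2,\{1,2\},\{1,2\})}\ket{x}$ explicitly for every $x \in \zo^{\{1,2\}}$ and verify that the output equals $(-1)^{x_1 x_2}\ket{x}$, which is the defining action of $\Lambda Z$ in the computational basis. Because $|I| = 2$, the map $\op M$ is a linear combination of only four signed graph states $\ket{P_2;S}$ with $S\subseteq\{1,2\}$, so the whole verification reduces to a $4 \times 4$ matrix identity.

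The conceptually cleanest route is to first identify $\ket{P_2}$ with $\Lambda Z\,\ket{++}$: a short stabiliser check showing that $\Lambda Z\,\ket{++}$ satisfies both $\op X_1 \op Z_2\ket G = \ket G$ and $\op X_2 \op Z_1\ket G = \ket G$, together with the overlap normalisation $\bra{00}G\rangle = 1/2$ from the footnote, pins the graph state down. Since each $\op Z_u$ commutes with $\Lambda Z$, this immediately yields $\ket{P_2;S} = \Lambda Z\,\op Z_S \ket{++}$ for every $S$, so $\Lambda Z$ can be pulled out of the sum in the definition of $\op M$, leaving the task of evaluating
\[
\frac{1}{2}\sum_{S \subseteq \{1,2\}} (-1)^{x \bullet \den S}\, \op Z_S \ket{++}.
\]

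Expanding $\ket{++} = \frac{1}{2}\sum_{y}\ket y$ and swapping the order of summation reduces the above to a sum of characters on $\gftwo^{\{1,2\}}$: the coefficient of $\ket y$ becomes $\frac{1}{4}\sum_{S\subseteq\{1,2\}}(-1)^{(x\oplus y)\bullet \den S}$, which by the standard character orthogonality equals $1$ when $x=y$ and $0$ otherwise. Hence the bracketed expression is just $\ket x$, and applying $\Lambda Z$ yields $(-1)^{x_1 x_2}\ket x$ as required. I do not foresee any real obstacle: the whole argument is a short self-contained calculation, and the only routine step is the character identity at the end. A more pedestrian alternative, if one prefers to avoid $\Lambda Z$ entirely, would be to write out each of the four $\ket{P_2;S}$ explicitly as a $\pm\tfrac12$-vector in the computational basis (by Proposition~\ref{prop:basis} these are orthonormal) and compute the four linear combinations by hand, reading off the diagonal matrix $\mathrm{diag}(1,1,1,-1)$ directly.
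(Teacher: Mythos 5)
Your proof is correct and follows essentially the same route as the paper: both expand $\op M_{(P_2,\{1,2\},\{1,2\})}\ket x$ over the four signs $S\subseteq\{1,2\}$ and use the character orthogonality $\sum_{S}(-1)^{(x\oplus y)\bullet \den S}=4\,\delta_{x,y}$ to kill the off-diagonal terms. The only cosmetic difference is that you extract the final phase by writing $\ket{P_2}=\Lambda Z\ket{++}$ and pulling $\Lambda Z$ out of the sum, whereas the paper evaluates the matrix element $\bra x P_2\rangle=\tfrac12(-1)^{x_1x_2}$ directly.
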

\begin{proof}
The simulation of $\Lambda Z$ is obtained by considering a the open graph $(P_2,\{1,2\},\{1,2\})$. Thus both qubits are inputs and outputs. The simulation is done without measurement. For any $x,y\in \{0,1\}^{\{1,2\}}$, 
\begin{eqnarray*}
\bra y \op M_{(P_2,\{1,2\},\{1,2\})}\ket x &=& \frac 12\sum_{S\subseteq \{1,2\}}(-1)^{x\bullet 1_S}\bra y  {P_2;S}\rangle \\
&=& \frac 12 \sum_{S\subseteq \{1,2\}}(-1)^{x\bullet 1_S}\bra y \op Z_S \ket {P_2}\\
&=& \frac 12 \sum_{S\subseteq \{1,2\}}(-1)^{x\bullet 1_S}(-1)^{y\bullet 1_S}\bra y {P_2}\rangle\\
&=& \frac 12 \left(\sum_{S\subseteq \{1,2\}}(-1)^{(x\oplus y)\bullet 1_S}\right)\bra y {P_2}\rangle
\end{eqnarray*}

If $x\neq y$, then $\sum_{S\subseteq \{1,2\}}(-1)^{(x\oplus y)\bullet 1_S} = 0$, so $\bra y \op M_{(P_2,\{1,2\},\{1,2\})}\ket x=0$. When $x=y$, $\sum_{S\subseteq \{1,2\}}(-1)^{(x\oplus y)\bullet 1_S} = 4$. As a consequence, 
For any $x\in \{0,1\}^{\{1,2\}}$, 
\begin{eqnarray*}
\bra x \op M_{(P_2,\{1,2\},\{1,2\})}\ket x &=& 2~ \bra x {P_2}\rangle\\
&=&2~\bra{ x_1^{(0)}}\bra {x_2^{0}}{P_2}\rangle\\
&=& \bra{ x_1^{(0)}}({\ket {0_1}+(-1)^{x_2}\ket {1_1}})\\
&=&(-1)^{x_1.x_2}
\end{eqnarray*}

So $\op M_{(P_2,\{1,2\},\{1,1\})}$ implements the map $\ket {x_1x_2}\mapsto (-1)^{x_1.x_2}\ket {x_1x_2}$, i.e. $\Lambda Z$. 
\end{proof}

An other property required for the proof of the main theorem is the existence of a planar circuit preparing the state $\ket{G}$.
\begin{lemma}
\label{lem:univ}
For any  graph $G=(V,E)$ of order $n=|V|$, there exists a $n$-qubit \emph{planar} circuit   $C_G$ of size $O(n^3)$ composed of  $\op {\Lambda Z}$ and $\op H$ to prepare $\ket G$.
\end{lemma}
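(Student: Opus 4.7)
The plan is to start from the obvious non-planar circuit that prepares $\ket G$ and then make it planar via SWAP-routing on a one-dimensional arrangement of qubits. Fix an arbitrary linear order $v_1,\ldots,v_n$ of $V$ and lay out the $n$ wires horizontally in this order. The textbook preparation of $\ket G$ starts from $\ket{0}^{\otimes n}$, applies $\op H$ on each wire to obtain $\ket{+}^{\otimes n}$, and then applies $\op{\Lambda Z}_{u,v}$ for each edge $uv \in E$; by Lemma~\ref{lem:CZ} this yields $\ket G$. The only obstruction to planarity is that an edge of $G$ may join wires that are far apart on the line, so the corresponding $\op{\Lambda Z}$ would cross the intermediate wires in the diagram.

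To remove these crossings I would implement each such $\op{\Lambda Z}_{u,v}$ using only nearest-neighbor two-qubit gates: transport the two endpoints until they occupy adjacent wires by a chain of nearest-neighbor SWAPs, apply $\op{\Lambda Z}$ on the adjacent pair, then undo the SWAPs to restore the original line order, so that subsequent edges can be processed identically. A nearest-neighbor SWAP decomposes as three CNOTs, and each CNOT decomposes as $(\op I\otimes \op H)\cdot \op{\Lambda Z}\cdot (\op I\otimes \op H)$; hence a nearest-neighbor SWAP is a constant-size gadget over $\{\op H,\op{\Lambda Z}\}$ acting on adjacent wires only. Concatenating such gadgets along the time axis yields a diagram with a planar embedding in the strip, since every individual gadget is nearest-neighbor and nearest-neighbor gates placed in sequence can be drawn without crossings.

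For the size bound I process the edges one at a time. Routing two wires to adjacency on a line of length $n$ uses $O(n)$ adjacent SWAPs, and the restoration uses the same number; each SWAP costs $O(1)$ primitive gates. Since $|E|\leq \binom{n}{2}=O(n^2)$, the total count is $n + |E|\cdot O(n) = O(n^3)$, matching the bound claimed in the lemma. I do not expect a deep obstacle here — the content is essentially a routing argument — but the points requiring care are checking that the SWAP gadget is really expressible in $\{\op H,\op{\Lambda Z}\}$ with only adjacent interactions, and that the un-routing step exactly reverses the qubit permutation so that later edges act on the intended pair of physical wires.
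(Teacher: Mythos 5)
Your proposal is correct and follows essentially the same route as the paper: start from the standard $O(n^2)$-size $\op{\Lambda Z}$-circuit preparing $\ket G$, insert $O(n)$ nearest-neighbor $\op{SWAP}$s per gate to make the diagram planar, and expand each $\op{SWAP}$ into a constant number of $\op H$ and $\op{\Lambda Z}$ gates (your three-CNOT decomposition reproduces exactly the paper's identity $\op{SWAP}_{u,v}=\op H_{u}\op{\Lambda Z}_{u,v}\op H_{u,v}\op{\Lambda Z}_{u,v}\op H_{u,v}\op{\Lambda Z}_{u,v}\op H_{u}$), giving the $O(n^3)$ bound. You simply spell out the routing and the SWAP gadget in more detail than the paper does.
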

\begin{proof} 
There exists a circuit of size $O(n^2)$ composed of $\op {\Lambda Z}$ only for preparing the graph state $\ket G$ \cite{HDERNB06-survey}. This circuit can be made planar by interspersing at most $n$ $\op{SWAP}$ gates between every $\op{\Lambda Z}$. Moreover every  $\op {SWAP}$ gate can be decomposed using  a constant number of $\op {\Lambda Z}$ et $\op H$:  $\forall u,v$, 
$$\op {SWAP}_{u,v} = \op H_{u} \op {\Lambda Z}_{u,v} \op H_{u,v} \op {\Lambda Z}_{u,v} \op H_{u,v} \op {\Lambda Z}_{u,v} \op H_{u}$$
As a consequence $\ket G$ can be prepared by a planar circuit of size $O(n^3)$ acting on $n$ qubits. 
\end{proof}
\begin{theorem}
 Any graph on $n$ vertices is pivot minor of a planar graph of $O(n^3)$ vertices. 
\end{theorem}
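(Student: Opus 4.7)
The plan is to translate the planar quantum circuit that prepares $\ket G$ into a planar open graph whose graph state, under X-measurements, collapses to a signed version of $\ket G$, and then invoke the measurement/pivot-minor correspondence. Given $G$ on $n$ vertices, Lemma \ref{lem:univ} supplies a planar circuit $C_G$ of size $O(n^3)$ over the gate set $\{\op H, \op{\Lambda Z}\}$ which prepares $\ket G$ from the input state $|+\rangle^{\otimes n}$. My intent is to convert $C_G$ into a planar open graph $(H, I, O)$ of $O(n^3)$ vertices such that X-measuring every non-output vertex of $\ket H$ produces a signed graph state on $O$ whose underlying graph is $G$.

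The construction walks over the planar layout of $C_G$ and replaces each gate by its measurement gadget. Each $\op{\Lambda Z}$ acting on two wires becomes a single edge between the current wire-vertices (Lemma \ref{lem:CZ}), and each $\op H$ acting on a wire becomes one additional X-measured vertex on that wire, forming a $P_2$ pattern (Lemma \ref{lem:H}). Wire segments between consecutive gates are instantiated as paths whose lengths are chosen to have the correct parity, so that a segment with no intervening gate acts as the identity by Lemma \ref{lem:H}. Since $C_G$ is planar and each gadget is a local insertion that respects the wire routing, the resulting open graph $(H, I, O)$ is planar; since each of the $O(n^3)$ gates contributes only $O(1)$ vertices, $|V(H)| = O(n^3)$.

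For correctness, $\op M_{(H,I,O)}$ applied to $|+\rangle^{\otimes I}$ yields exactly $\ket H$ (the inner sums $\sum_{x\in\{0,1\}^I} (-1)^{x\bullet \den S}$ collapse to $2^{|I|}\delta_{S,\emptyset}$, as in the preliminary discussion after the definition of $\op M$). By induction over the gadget simulations of Lemmas \ref{lem:H} and \ref{lem:CZ}, any sequence of X-measurements on $V(H)\setminus O$ transforms $\ket H$ into a state on the outputs equal to $\op P\ket G$ for some Pauli operator $\op P$ depending on the measurement outcomes; and $\op P\ket G$ is, by the Pauli-action lemma for signed graph states, of the form $\ket{G;S}$ for some $S\subseteq O$. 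Thus some sequence of X-measurements carries $\ket H$ to $\ket{G;S}$, and the final lemma of Section \ref{secmeas} then yields that $G$ is a pivot minor of $H$.

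The main obstacle, and the step demanding the most care, is the geometric bookkeeping: each $\op H$ gadget contributes one X-measured vertex to its wire while pure identity segments must contribute an even number, so the segment lengths must be chosen consistently with both the planar routing inherited from $C_G$ and the wire-by-wire gate parities, all within the $O(n^3)$ vertex budget. Once this layout is fixed, the sign tracking, the composition of gadget simulations, and the final invocation of the pivot-minor lemma all follow mechanically from the tools already assembled in the paper.
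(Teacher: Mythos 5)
Your proposal is correct and follows essentially the same route as the paper: compile the planar $O(n^3)$-size circuit of Lemma \ref{lem:univ} into a planar open graph by composing the $\op H$ and $\op{\Lambda Z}$ gadgets of Lemmas \ref{lem:H} and \ref{lem:CZ}, observe that all non-output qubits are $X$-measured, and invoke the measurement-to-pivot-minor lemma of Section \ref{secmeas}. Your write-up is in fact more careful than the paper's about the parity of the wire segments and the sign bookkeeping, which the paper leaves implicit.
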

\begin{proof}
For any graph $G$, the planar circuit which implements $G$ on imput $\frac{1}{\sqrt{2^{|V|}}}\sum_{x\in \{0,1\}^{|V|}} \ket x$ can be simulated using an open graph $G'$ on with all but output qubits are $X$-measured: this open graph is obtained by composing the open graphs simulating $H$ and $\Lambda Z$ accordingly to the circuit. Since only $X$ measurements occur, $G$ is a pivot minor of $H$ (lemma \ref{lem:XZpivot}). Therefore any graph is a pivot minor of a planar graph.
\end{proof}

\subsection{Some particular planar graphs}

\begin{theorem}\label{thm:pivotgrid}
Any graph of $n$ vertices is pivot minor of a triangular grid on $O(n^4)$ vertices.
\end{theorem}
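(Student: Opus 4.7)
My plan is to re-run the simulation from the previous theorem inside a triangular grid instead of an abstract planar graph. Recall that $G$ is a pivot minor of the open graph associated with the planar circuit $C_G$ from Lemma \ref{lem:univ}, which acts on $n$ wires and uses $O(n^3)$ gates, each of type $\op H$ or $\op{\Lambda Z}$. The natural layout of such a circuit---$n$ horizontal wires of length $O(n^3)$---fits inside a triangular grid of dimensions $n \times O(n^3)$, whose total vertex count is $O(n^4)$. The intended output is therefore a subgraph of such a grid which, after $X$-measuring all non-output qubits, prepares $\ket G$ on the outputs.

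The individual gate gadgets used in the planar construction transfer over cleanly: horizontal paths in the triangular grid implement identities and Hadamards according to the parity condition of Lemma \ref{lem:H}, and single vertical edges between two wires implement $\op{\Lambda Z}$ as in Lemma \ref{lem:CZ}. I would schedule gates acting on distinct pairs of wires at distinct time steps so that no two gadgets overlap spatially. The principal obstacle, and the only substantial departure from the previous proof, is the presence of diagonal edges in the triangular grid which are absent in a rectangular grid: these diagonals couple vertices on neighbouring wires and would in general introduce spurious entanglement that does not correspond to any gate of $C_G$.

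My approach to this obstacle is to interleave each wire with an auxiliary row of vertices that are $X$-measured in tandem with the internal qubits of the wire. By Proposition \ref{prop:Xpiv}, $X$-measuring a pair of adjacent qubits simulates a pivoting followed by the deletion of both; by choosing the embedding so that each unwanted diagonal edge lies on such an $X$-measured pair, the diagonal is absorbed by the pivotings while the intended gate semantics of the surrounding gadget are preserved (possibly up to a $Z$-correction collected into the sign of the resulting signed graph state, as in the proof of Lemma \ref{lem:H}). This insulation layer inflates the vertex count only by a constant factor, so the total grid size remains $O(n^4)$.

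Composing the gadgets according to the layout of $C_G$ then yields an open graph $H'$ that is a subgraph of a triangular grid on $O(n^4)$ vertices; as in the previous theorem, $X$-measuring all non-output vertices of $H'$ prepares $\ket G$ on the outputs. Invoking the pivot-minor characterization of $X$-measurements (the lemma preceding Section \ref{sec:pivotplanar}, built on Proposition \ref{prop:Xpiv}) then identifies $G$ as a pivot minor of this triangular grid, completing the proof.
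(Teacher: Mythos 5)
There is a genuine gap in how you handle the diagonal edges, which is the one place where this theorem requires a new idea beyond the planar case. The paper's mechanism is different and much simpler: all grid vertices that do not lie on the intended wire/gadget layout are \emph{$Z$-measured}, and a $Z$-measurement is exactly a vertex deletion (with a sign update). Deleting a vertex removes every edge incident to it, so by spacing the wires out (the paper uses a $4n\times 4d$ grid) one arranges that every unwanted diagonal has at least one endpoint among the deleted vertices, and the surviving induced subgraph is precisely the composition of the $\op H$, $\op{\Lambda Z}$ and identity gadgets. Your proposal never invokes $Z$-measurements at all; instead you propose to ``absorb'' each unwanted diagonal by $X$-measuring an auxiliary adjacent pair and appealing to Proposition \ref{prop:Xpiv}. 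But that proposition says the pair measurement effects a \emph{pivot} on the edge $uv$ followed by deletion of $u$ and $v$, and in a triangular grid the sets $\mathcal N(u)\cap\mathcal N(v)$, $\mathcal N(u)\setminus\mathcal N(v)$, $\mathcal N(v)\setminus\mathcal N(u)$ are all nonempty (adjacent grid vertices have common neighbours --- that is what makes the grid triangular). The pivot therefore complements all edges between these three classes, generically \emph{creating} new spurious edges between neighbouring wires rather than removing the coupling. Your assertion that ``the diagonal is absorbed by the pivotings while the intended gate semantics of the surrounding gadget are preserved'' is exactly the point that needs proof, and it fails in general.

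A secondary issue: you describe the gadget layout as ``a subgraph of such a grid,'' but pivot-minor operations allow only pivots and vertex deletions, so deletions yield \emph{induced} subgraphs. You need the wire paths to be induced paths of the grid after the unused vertices are removed, which again is what the paper's $Z$-measurement step and its generous spacing guarantee. The fix is to replace your auxiliary $X$-measured rows by $Z$-measurements of all off-gadget vertices (legitimate, since the lemma closing Section \ref{secmeas} shows that any sequence of $X$- and $Z$-measurements yields a pivot minor), and to check that the resulting induced subgraph is the intended gadget composition; with that change the $O(n^4)$ bound goes through as you state.
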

\begin{proof}
  The simulation of the circuit $C_G$ that prepares $\ket{G}$ can be embedded into a triangular grid of size $4n*4d$ where $n= |V|$ and $d$ is the depth $C_G$ :

The simulation with $X$ and $Z$  measurement of any circuit using the $\Lambda Z$ and $H$ gates 
is explained by the following figures where  the non-output vertices on the bold line are $X$-measured  and the other non-output vertices  are  $Z$-measured.  Note that it is important to be able to simulate the identity in this model in order to compose the different simulations.

The output qubits   (represented with a white circle) state corresponds to the  application of the simulated  unitary  transformation on the input qubits. (represented by a square).

\begin{itemize}
	\item Simulation of $Id$:
		\begin{center}
			\includegraphics[scale=0.15]{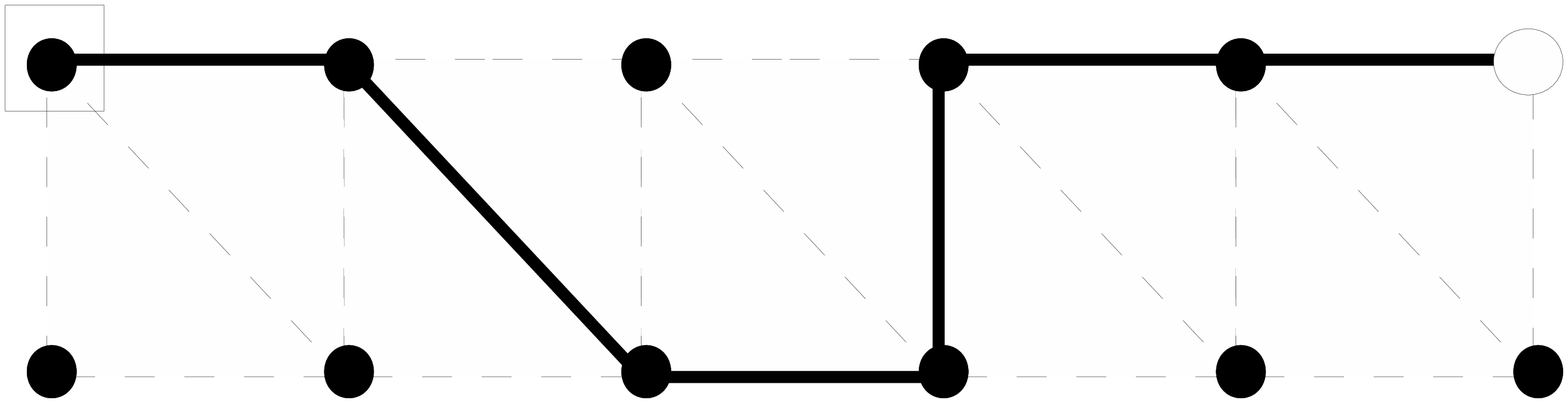} 
			\vspace{-0.8cm}
		\end{center}
	\item Simulation of $H$:
		\begin{center}
			\includegraphics[scale=0.15]{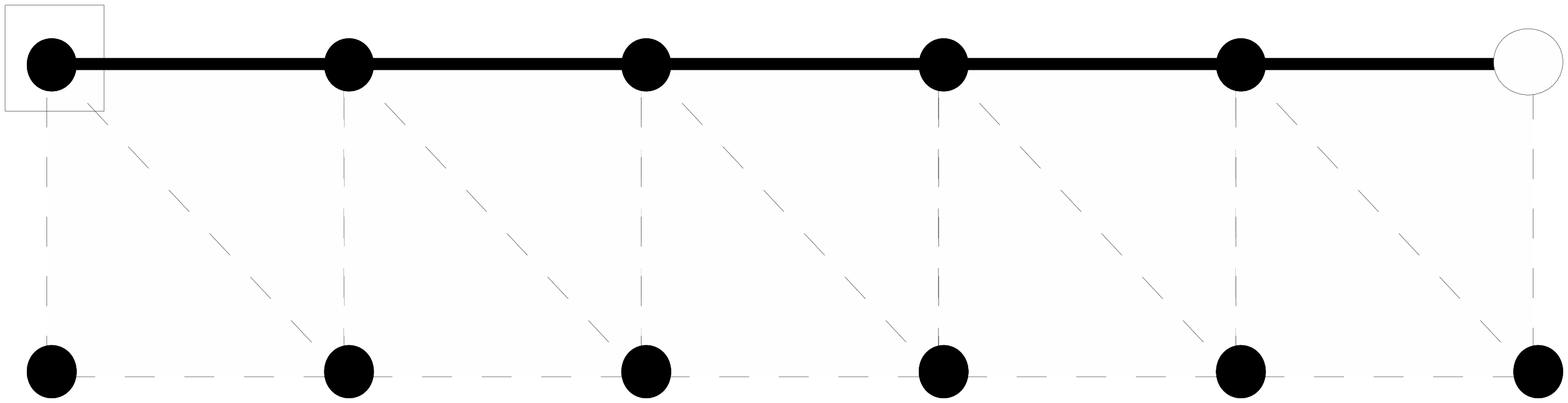} 
			\vspace{-0.8cm}
		\end{center}	
	\item Simulation of $\Lambda Z$:
		\begin{center}
			~~~~~~~\includegraphics[scale=0.23]{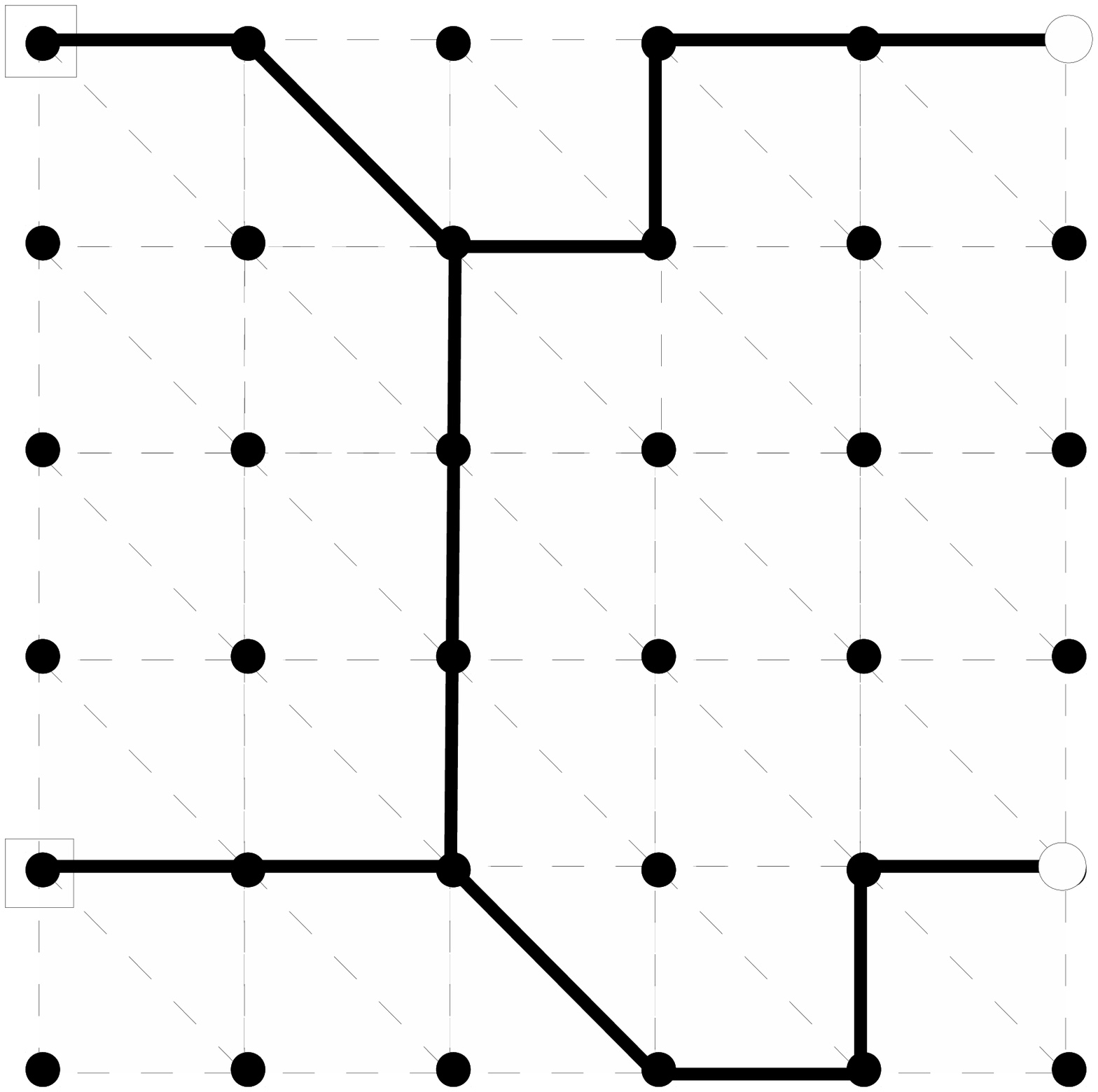} 
		\end{center}
\end{itemize}

As a consequence any graph is a pivot minor of a triangular grid of size at most $O(n^4)$ (=$O(n* d)= O(n*n^2*n)$)
\end{proof}

Other regular graph states, like rectangular\footnote{Graph states described by  rectangular grids are also called \emph{cluster states}.} or hexagonal grids, are commonly used in quantum computation. But, contrary to triangular grids, we show that some graphs are  pivot minor of no rectangular nor hexagonal grid:

\begin{lemma}\label{thm:rect}
There exists a graph which is a pivot minor of no rectangular nor hexagonal grid.
\end{lemma}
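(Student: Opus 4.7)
The plan is to exhibit a pivot-minor invariant that distinguishes rectangular and hexagonal grids from triangular grids. The invariant I use is bipartiteness: rectangular and hexagonal grids are planar with all faces of even length (4-cycles and 6-cycles respectively), hence every cycle is even and the graph is bipartite. The triangular grid, by contrast, contains odd cycles. So if I can show that pivot minors of a bipartite graph remain bipartite, the triangle $K_3$ will witness the lemma.

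Vertex deletion obviously preserves bipartiteness, so the only real step is to verify that pivoting does. Let $G$ be bipartite with parts $(X, Y)$ and let $uv$ be an edge with $u \in X$ and $v \in Y$. Using the decomposition from Figure \ref{fg1} (with the standard convention $A, B, C \subseteq V \setminus \{u, v\}$), bipartiteness forces $A \subseteq Y \setminus \{v\}$, $B \subseteq X \setminus \{u\}$, and $C = \mathcal N(u) \cap \mathcal N(v) = \emptyset$ because the neighborhoods of $u$ and $v$ lie on opposite sides of the bipartition. Consequently the only edge toggles during pivoting lie in $A \times B$, which respects the bipartition $(X, Y)$. The final ``exchange of $u$ and $v$'' step of the pivoting operation simply reassigns $u$ to $Y$ and $v$ to $X$, so that $G \wedge uv$ is bipartite with parts $(X \setminus \{u\}) \cup \{v\}$ and $(Y \setminus \{v\}) \cup \{u\}$.

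Combining these observations, every pivot minor of a bipartite graph is bipartite. Since $K_3$ is not bipartite, it cannot be a pivot minor of any rectangular or hexagonal grid, which proves the lemma.

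The main subtlety I expect is ensuring that the ``exchange of $u$ and $v$'' built into pivoting does not secretly introduce an odd cycle. Because pivoting decomposes as three local complementations $G \wedge uv = G * u * v * u$ and the intermediate graphs $G * u$ and $G * u * v$ are generally not bipartite, the argument must be made at the level of the whole pivoting step via the $A, B, C$ picture rather than stage by stage.
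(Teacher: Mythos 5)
Your proposal is correct and follows essentially the same route as the paper: both use bipartiteness as a pivot-minor invariant together with the triangle $K_3$ as the non-bipartite witness. The only difference is that the paper simply cites Bouchet (\cite{Bouchet94-circle}) for the fact that pivot minors of bipartite graphs are bipartite, whereas you prove it directly via the $A,B,C$ decomposition (correctly observing that $C=\mathcal N(u)\cap\mathcal N(v)=\emptyset$ forces all toggles into $A\times B$, which respects the bipartition), a self-contained argument that is sound and a worthwhile addition.
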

\begin{proof}
Pivot minors of bipartite graphs are bipartite \cite{Bouchet94-circle},
 thus, since any rectangular or hexagonal grid is bipartite, all pivot minors of these kind of grids are bipartite. As a consequence, any graph which is not bipartite (e.g. a triangle) is a pivot minor of no rectangular nor hexagonal grid.
\end{proof}
Theorem \ref{thm:rect} implies that some graph states, like the triangle, cannot be reached from the traditional cluster state (graph state described by a rectangular grid) by a sequence of $X$- and $Z$-measurements. Triangular grids turn out to be more adapted to one-way quantum computation than rectangular or hexagonal ones, as it is underlined in the next section, by proving that measurements  in the $(X,Z)$-plane on a triangular graph state is a universal model of quantum computation.

\section{Universality of $(X,Z)$-measurements}\label{sec:univ}

Briegel and Raussendorf \cite{RB01} have proved that any unitary transformation can be simulated by applying on  a rectangular grid, a sequence of one-qubit measurements, such that each measurement is either according to $Z$, or in the $(X,Y)$-plane, i.e. according to an observable $\cos(\alpha)X+\sin (\alpha)Y$ for some $\alpha$. We prove that any (real) unitary transformation can be simulated by a sequences of measurements in the  $(X,Z)$-plane, on triangular grids. 

\begin{theorem}
$(X,Z)$-measurements are universal for one-way quantum computation.
\end{theorem}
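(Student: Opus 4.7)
The plan is to augment the $\op X$- and $\op Z$-measurement gadgets of Theorem \ref{thm:pivotgrid} with a new \emph{rotation gadget} that performs a real single-qubit rotation $R_y(\alpha)$ for arbitrary $\alpha$, via a measurement in the $(X,Z)$-plane at angle $\alpha$. Together with the already-established simulations of $\op H$ and $\op{\Lambda Z}$, this produces a universal set $\{\op H, R_y(\theta), \op{\Lambda Z}\}$ for real orthogonal quantum computation; a standard real-encoding on one extra qubit then yields universality for arbitrary (complex) quantum computation.

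The first step is to generalise Lemma \ref{lem:H}. Inserting $\bra{r_i^{(\alpha_i)}}$ for each internal measurement of the path $(P_n,\{1\},\{n\})$ and repeating the computation (using $\op X_v\op Z_{\mathcal N(v)}\ket G=\ket G$ to rewrite each internal $\op Z$ as an $\op X$ on a neighbour) yields, up to a Pauli byproduct depending on $(r_1,\ldots,r_{n-1})$, an alternating product of $\op H$'s and $R_y$-rotations on the output qubit. Since $\op H R_y(\beta)\op H = R_y(-\beta)$, any such product collapses to $R_y(\theta)$ or $\op H R_y(\theta)$ for some $\theta$, and by choosing the $\alpha_i$ appropriately we realise any element of $SO(2)$ on the output. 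A subtlety absent in the $(X,Y)$-plane construction is that the two outcomes of an $(X,Z)$-plane measurement on a graph state are in general not equiprobable; the gadget is made deterministic only after an adaptive rule on the subsequent measurement angles.

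The second step is to assemble the full protocol. Given a circuit $C$ on $n$ qubits built from $\{\op H, R_y(\theta), \op{\Lambda Z}\}$, replace each gate by its gadget, compose them as in the proof of Theorem \ref{thm:pivotgrid}, and embed the result in a triangular grid using the $\op H$-, $\op{\Lambda Z}$- and identity-templates already drawn there, plus the new rotation template. This produces a triangular grid of size polynomial in the size of $C$ whose $(X,Z)$-plane measurement pattern simulates $C$. For an arbitrary complex unitary $\op U = \op U_R + i\op U_I$ on $n$ qubits, the real-encoding $\ket a + i\ket b \mapsto \ket a\otimes\ket 0 + \ket b\otimes\ket 1$ expresses $\op U$ as the real unitary $\op U_R\otimes \op I + \op U_I\otimes (-i\op Y)$ on $n+1$ qubits, which is handled by the previous construction.

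The main obstacle is the management of Pauli byproducts. Corrections pass transparently through $\op H$- and $\op{\Lambda Z}$-gadgets (since those are Clifford), but crossing a rotation gadget can flip the sign of its angle, so the protocol is necessarily adaptive: the angle used for each non-output measurement must depend on outcomes of earlier measurements in its past. Proving that this adaptive schedule is causal and consistent on the triangular-grid embedding --- i.e.\ exhibiting a flow, in the sense of \cite{DK06,DKPP09}, for $(X,Z)$-plane measurement patterns --- is the technical heart of the argument.
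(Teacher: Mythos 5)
There is a genuine gap, and it is at the heart of your construction: the ``rotation gadget'' obtained by generalising Lemma~\ref{lem:H} to an arbitrary angle on a path does not implement a unitary. Concretely, for the open graph $(P_2,\{1\},\{2\})$ the state before measurement is $a\ket 0\ket + + b\ket 1 \ket -$, and projecting qubit $1$ onto $\bra{0^{(\alpha)}}$ (resp.\ $\bra{1^{(\alpha)}}$) sends $(a,b)$ to $\op H\cdot\mathrm{diag}(\cos\frac\alpha2,\sin\frac\alpha2)$ (resp.\ $\op H\cdot\mathrm{diag}(\sin\frac\alpha2,-\cos\frac\alpha2)$) applied to $(a,b)$. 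Unless $\alpha$ is a multiple of $\pi/2$, neither branch is proportional to a unitary, and the two branches are not related by a Pauli (nor by any unitary) correction: the measurement acts as a genuine filter that irreversibly distorts the amplitudes. This is exactly the difference with the $(X,Y)$-plane, where both basis states $\frac{\ket 0\pm e^{i\theta}\ket 1}{\sqrt 2}$ have equal weight on $\ket 0$ and $\ket 1$, so each branch is $\frac{1}{\sqrt2}$ times a unitary and the branches differ by a $\op Z$ byproduct. Your acknowledged ``subtlety'' about non-equiprobable outcomes is therefore not a subtlety to be handled by an adaptive rule on subsequent angles --- no downstream choice of angles can undo a non-unitary Kraus operator --- it is the reason the path gadget fails. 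Consequently the claimed collapse to $R_y(\theta)$ or $\op H R_y(\theta)$, and everything built on it (the flow/causality analysis included), does not get off the ground.

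The paper avoids this obstruction by changing the \emph{graph}, not the angle schedule: it takes the standard $(X,Y)$-plane pattern for $P(\alpha)$ from \cite{RBB03} and applies a local complementation on the measured (central) vertex. The corresponding local Clifford $\sqrt{\op X}^\dagger$ conjugates the observable $\cos(\alpha)\op X+\sin(\alpha)\op Y$ into $\cos(-\alpha)\op X+\sin(-\alpha)\op Z$, and $\sqrt{\op Z}$ on the neighbours turns their $\op Y$-measurements into $\op X$-measurements; the resulting gadget uses only $(X,Z)$-plane measurements, is LC-equivalent to the original pattern and hence inherits its unitarity and Pauli-correctability, and its underlying graph contains a triangle --- which is precisely why the ambient resource must be a \emph{triangular} grid. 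If you want to salvage your write-up, replace the path rotation gadget by this locally-complemented triangle gadget; your surrounding scaffolding (universal real gate set, real encoding on one extra qubit, composition into the grid) is otherwise sound and consistent with the paper's strategy.
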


\begin{proof}
The set of unitary transformations $\{H, \Lambda Z, P(\alpha), \alpha \in [0,2\pi)\}$ where $P(\alpha)= \cos(\alpha/2) X+ \sin(\alpha/2) Z$ is universal for (real) quantum computation \cite{shi02}.  $\Lambda Z$ and $H$ can be implemented in the one-way model using $X$-measurements  as follows (see lemmas \ref{lem:H} and \ref{lem:CZ}):

\psfrag{X}[c][][1]{$X$}
		\centerline{	\includegraphics[scale=0.2]{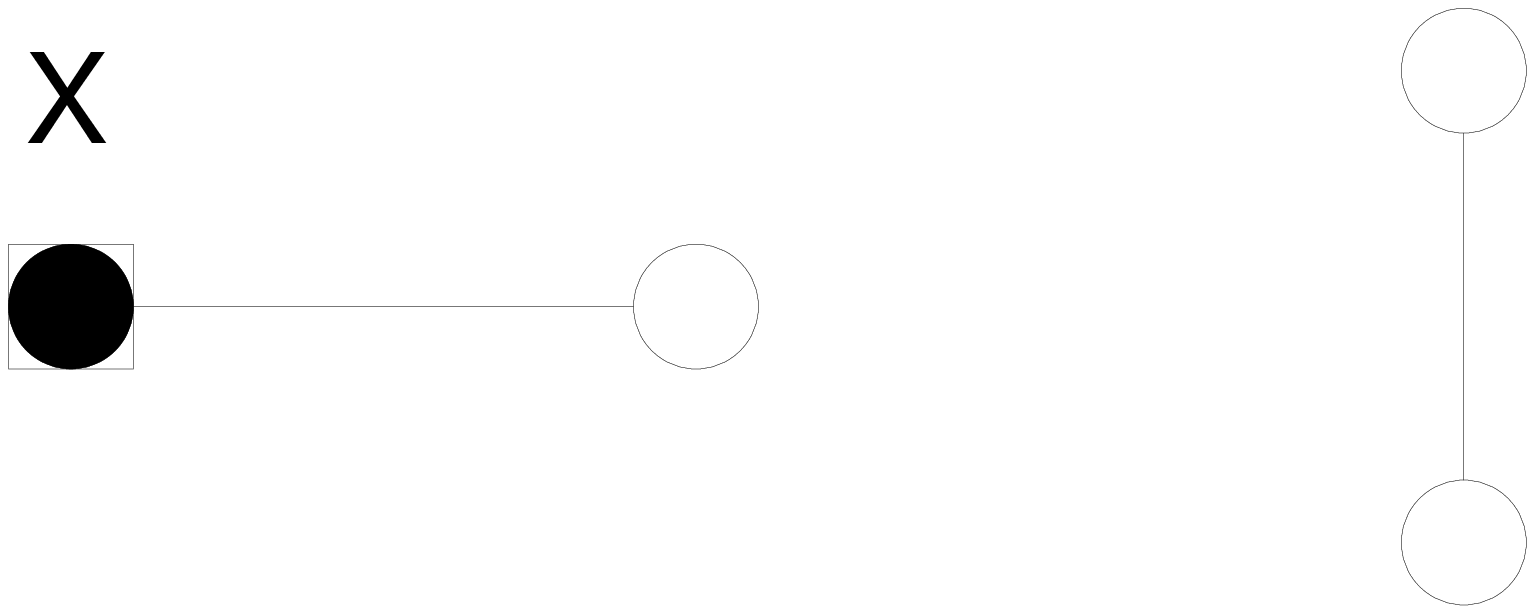} }
\vspace{-0.9cm}

Regarding $P(\alpha)$, it is implemented by the following one-way quantum computation using $(X,Y)$-measurements \cite{RBB03}:

\psfrag{X}[c][][1]{$X$}
\psfrag{Y}[c][][1]{$Y$}
\psfrag{planxy}[c][][1]{$\cos(\alpha)X+\sin(\alpha)Y$}
		\centerline{	\includegraphics[scale=0.2]{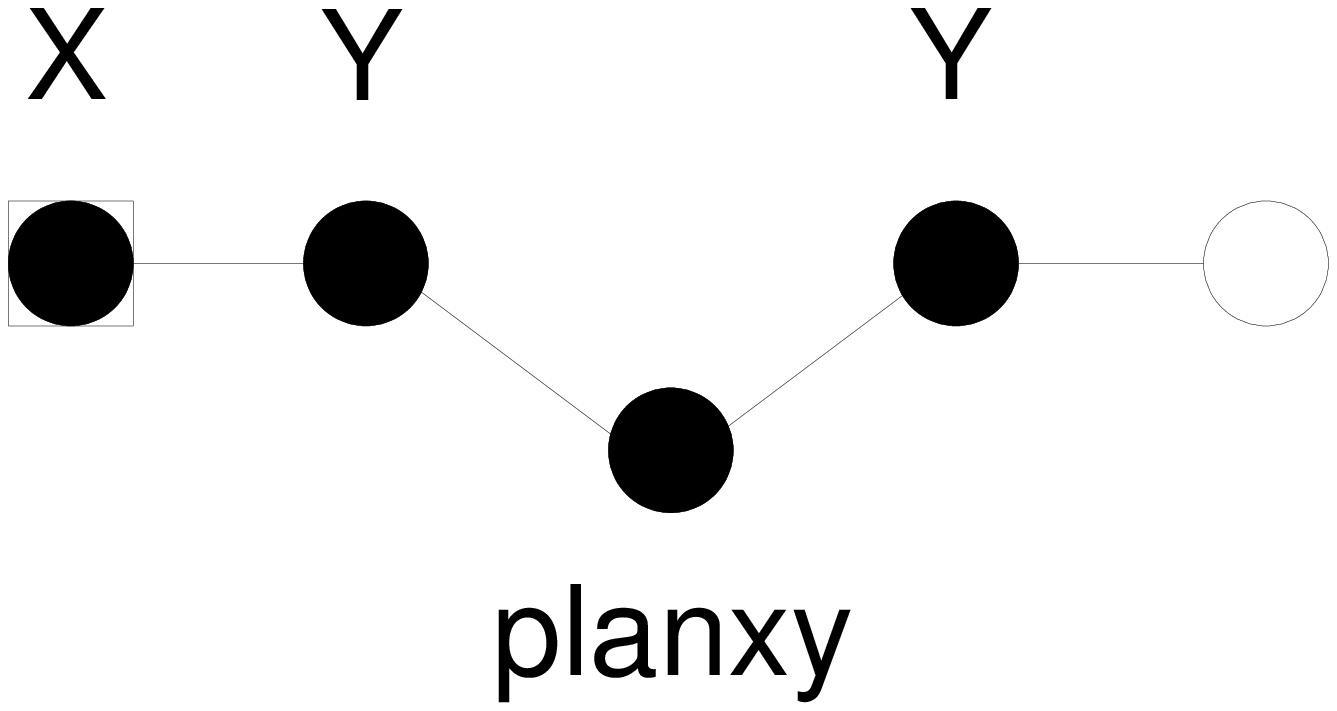} }
\vspace{-2.5cm}

This computation is equivalent to a one-way quantum computation involving measurement in the $(X,Z)$ plane only:

\psfrag{X}[c][][1]{$X$}
\psfrag{planxz}[c][][1]{$\cos(-\alpha)X+\sin(-\alpha)Z$}
		\centerline{	\includegraphics[scale=0.2]{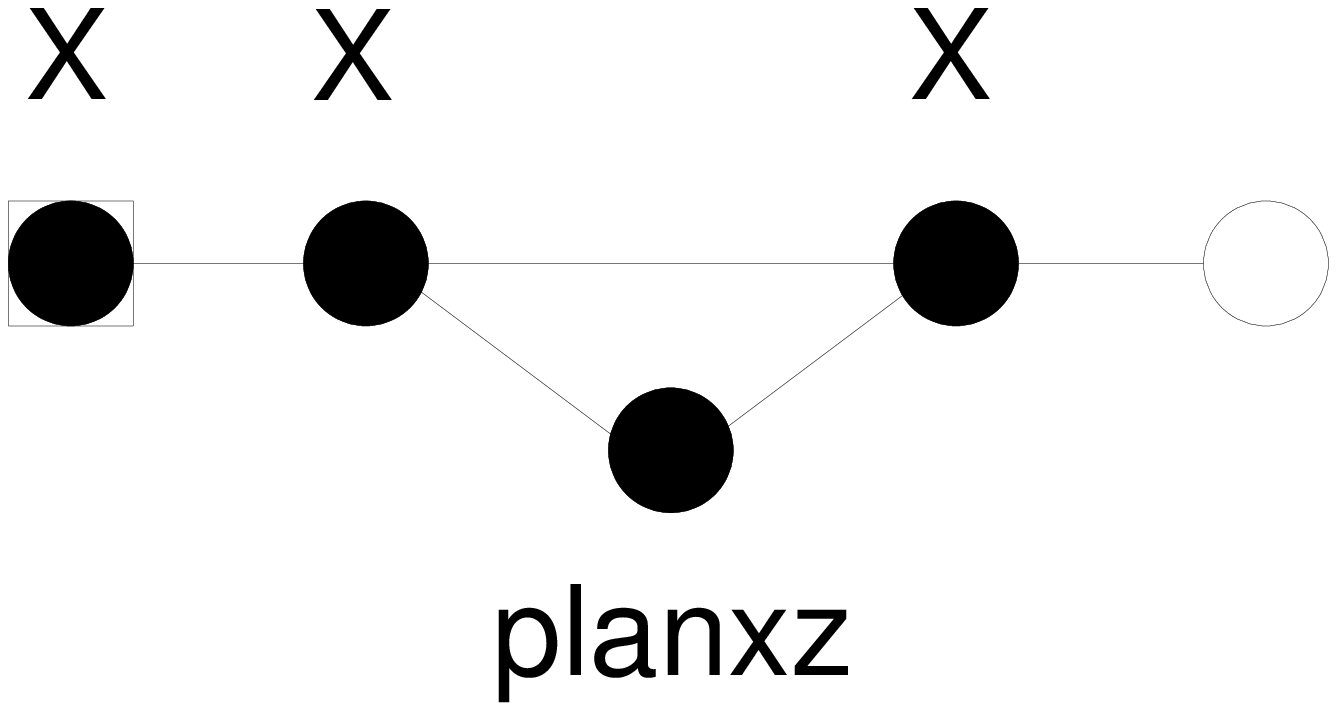} }
\vspace{-2.5cm}

The equivalency between the two computations can be proved using the local complementation. Indeed, the two underlying graphs are locally equivalent by applying a local complementation on the central vertex. Thus, if one takes the latter graph (with a triangle), applies $\sqrt X^\dagger$ on the central vertex and $\sqrt Z$ on its neighbors and then measures the central vertex according to $\cos(\alpha)X+\sin(\alpha)Y$ and its neighbors according to $Y$ then it leads to the simulation of $P(\alpha)$. A unitary  $U$ followed by a measurement according to an observable $A$ is equivalent to a measurement according to the rotated observable $U A U^\dagger$, thus the application of a measurement according to $\sqrt X^\dagger ( \cos(\alpha)X+\sin(\alpha)Y) \sqrt X = \cos(-\alpha)X+\sin(-\alpha)Z$ on the central vertex and according to $\sqrt Z Y \sqrt Z^\dagger  = X$ on its neighbors lead to the simulation of $P(\alpha)$.   \end{proof}

\begin{theorem}
Triangular graph states are universal resources for one-way quantum computation based on $(X,Z)$-measurements.
\end{theorem}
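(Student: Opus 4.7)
The plan is to combine the two preceding ingredients of the section: the previous theorem shows that the gate set $\{H, \Lambda Z, P(\alpha)\}$ can be simulated entirely by $(X,Z)$-plane measurements on a few very small patterns (the path pattern for $H$ and $\Lambda Z$, and a three-vertex triangle pattern for $P(\alpha)$), while Theorem \ref{thm:pivotgrid} already provides an explicit embedding of $\{H, \Lambda Z\}$-circuits into the triangular grid using only $X$- and $Z$-measurements — both of which are themselves special cases of $(X,Z)$-measurements (taking $\alpha=\pi/2$ and $\alpha=0$ respectively). The task is therefore to extend the simulation of Theorem \ref{thm:pivotgrid} so that it also covers $P(\alpha)$, and verify that the extended collection of gadgets can be tiled inside a single triangular grid without breaking compositionality.

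First, I would enlarge the library of local ``tiles'' used in Theorem \ref{thm:pivotgrid} by one new gadget that realises $P(\alpha)$ on the triangular grid. Concretely, take a small subregion of the triangular grid containing a natural triangle (three mutually adjacent vertices forming an elementary triangular face), together with the two wire-endpoints one wants to feed in and out. By $Z$-measuring all the grid vertices surrounding this subregion — that is, the ``boundary'' vertices not participating in the desired gadget — the underlying signed graph state is reduced (via the $Z$-measurement rule proved in Section \ref{secmeas}) to a graph state whose graph is the $P(\alpha)$ pattern depicted in the previous theorem, up to local Pauli corrections that can be absorbed into the classical feed-forward. Then applying the $(X,Z)$-measurement on the central vertex and $X$-measurements on its two neighbours, as prescribed by the proof of the previous theorem, simulates $P(\alpha)$.

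With this extra tile in hand, the construction of Theorem \ref{thm:pivotgrid} is reused verbatim: decompose the target unitary $U$ into a (real) circuit over $\{H, \Lambda Z, P(\alpha)\}$ by the universality result of Shi, lay out the circuit on a triangular grid exactly as in Theorem \ref{thm:pivotgrid} (with the $H$-tile, $\Lambda Z$-tile, identity-tile and now the $P(\alpha)$-tile placed along the grid in accordance with the circuit's wires), and $Z$-measure every ``unused'' grid vertex while $X$-measuring the wire vertices and performing the $(X,Z)$-measurements required by the $P(\alpha)$ tiles. Since every measurement involved lies in the $(X,Z)$-plane, this exhibits the triangular grid as a universal resource for one-way quantum computation restricted to $(X,Z)$-measurements. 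A polynomial size bound of the form $O(n^4)$ (where $n$ is the size of the circuit) follows from the same layout counting as in Theorem \ref{thm:pivotgrid}.

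The main obstacle I anticipate is the careful bookkeeping of Pauli by-products when embedding the $P(\alpha)$ gadget into the grid: the $Z$-measurements that ``carve out'' the gadget introduce sign changes on the signed graph state (per the $\op Z$-measurement rule of Section \ref{secmeas}), and these must be tracked so that the measurement angles of the gadget's $(X,Z)$-measurement can be adapted on the fly — exactly in the same spirit as the classical feed-forward of the standard one-way model. Verifying that this feed-forward is consistent with the composition of tiles (so that gadgets can be chained along the grid without spurious noncommuting corrections) is the only nontrivial piece; the remaining steps are essentially a repackaging of Theorem \ref{thm:pivotgrid} together with the previous theorem.
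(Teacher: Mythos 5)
Your proposal is correct and follows essentially the same route as the paper: the paper's (two-sentence) proof combines the preceding theorem --- the $(X,Z)$-measurement gadgets for $H$, $\Lambda Z$ and $P(\alpha)$ --- with Theorem \ref{thm:pivotgrid}, carving the composed gadget graph out of the triangular grid by $X$- and $Z$-measurements (both in the $(X,Z)$ plane) and then running the measurement pattern on the resulting graph state. Your explicit $P(\alpha)$-tile on an elementary triangular face and the Pauli feed-forward bookkeeping are just a more detailed rendering of that same two-step argument.
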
 
\begin{proof}
Any real unitary can be implemented by composing the one-way computations using $(X,Z)$ measurements only (lemma \ref{lem:univ}). The graph state corresponding to this composition can be obtained from the triangular grid by means of $X$ and $Z$ measurements only (theorem \ref{thm:pivotgrid}).
\end{proof}


\begin{thebibliography}{10}

\bibitem{ChuangShor}
Salman Beigi, Isaac Chuang, Markus Grassl, Peter Shor, and Bei Zeng.
\newblock Graph concatenation for quantum codes.
\newblock {\em Journal of Mathematical Physics}, 52(2)(022201), 2011.

\bibitem{Bouchet87}
Andr\'{e} Bouchet.
\newblock Graphic presentations of isotropic systems.
\newblock {\em J. Comb. Theory Ser. A}, 45:58--76, July 1987.

\bibitem{Bouchet85-connectivity}
Andr\'e Bouchet.
\newblock Connectivity of isotropic systems.
\newblock In New York~Academy of~Sciences, editor, {\em Proceedings of the
  third international conference on Combinatorial mathematics}, pages 81--93,
  1989.

\bibitem{Bouchet94-circle}
Andr\'e Bouchet.
\newblock Circle graph obstructions.
\newblock {\em J. Comb. Theory, Ser. B}, 60(1):107--144, 1994.

\bibitem{BFK08}
Anne Broadbent, Joseph Fitzsimons, and Elham Kashefi.
\newblock Universal blind quantum computation.
\newblock In {\em 50th Annual IEEE Symposium on Foundations of Computer
  Science, FOCS 2009}, 2009.

\bibitem{BKMP07}
Daniel~E. Browne, Elham Kashefi, Mehdi Mhalla, and Simon Perdrix.
\newblock Generalized flow and determinism in measurement-based quantum
  computation.
\newblock {\em New Journal of Physics (NJP)}, 9(8), 2007.

\bibitem{BKP10}
Daniel~E. Browne, Elham Kashefi, and Simon Perdrix.
\newblock Computational depth complexity of measurement-based quantum
  computation.
\newblock In {\em Theory of Quantum Computation, Communication, and
  Cryptography (TQC'10)}, volume 6519, pages 35--46. LNCS, 2011.

\bibitem{DK06}
Vincent Danos and Elham Kashefi.
\newblock Determinism in the one-way model.
\newblock {\em Physical Review A}, 74(052310), 2006.

\bibitem{DKP07}
Vincent Danos, Elham Kashefi, and Prakash Panangaden.
\newblock The measurement calculus.
\newblock {\em J. ACM}, 54(2), 2007.

\bibitem{DKPP09}
Vincent Danos, Elham Kashefi, Prakash Panangaden, and Simon Perdrix.
\newblock Extended measurement calculus.
\newblock {\em Semantic Techniques in Quantum Computation, Cambridge University
  Press}, 2010.

\bibitem{Fraysseix81}
Hubert de~Fraysseix.
\newblock Local complementation and interlacement graphs.
\newblock {\em Discrete Mathematics}, 33(1):29--35, 1981.

\bibitem{DdW09}
Andrew Drucker and Ronald de~Wolf.
\newblock {\em Quantum Proofs for Classical Theorems}.
\newblock Number~2 in Graduate Surveys. Theory of Computing Library, 2011.

\bibitem{DAB03}
Wolfgang D\"ur, Hans Aschauer, and Hans~J. Briegel.
\newblock Multiparticle entanglement purification for graph states.
\newblock {\em Phys. Rev. Lett.}, 91:107903, Sep 2003.

\bibitem{Geelen97}
J.~F. Geelen.
\newblock A generalization of tutte's characterization of totally unimodular
  matrices.
\newblock {\em J. Comb. Theory Ser. B}, 70:101--117, May 1997.

\bibitem{GJMP}
Sylvain Gravier, J\'er\^ome Javelle, Mehdi Mhalla, and Simon Perdrix.
\newblock Optimal accessing and non-accessing structures for graph protocols.
\newblock {\em CoRR}, abs/1109.6181, 2011.

\bibitem{HDERNB06-survey}
Marc Hein, Wolfgang D\"ur, Jens Eisert, Robert Raussendorf, Maarten~Van den
  Nest, and Hans~J. Briegel.
\newblock Entanglement in graph states and its applications.
\newblock In {\em Proceedings of the International School of Physics ``Enrico
  Fermi'' on ``Quantum Computers, Algorithms and Chaos''}, July 2005.

\bibitem{HMP06}
Peter H\/{\o}yer, Mehdi Mhalla, and Simon Perdrix.
\newblock Resources required for preparing graph states.
\newblock In {\em Proceedings of ISAAC06, LNCS}, volume 4288, pages 638--649,
  2006.

\bibitem{JMP11}
J\'er\^ome Javelle, Mehdi Mhalla, and Simon Perdrix.
\newblock New protocols and lower bound for quantum secret sharing with graph
  states.
\newblock Preprint arXiv:1109.1487, 2011.

\bibitem{KMMP09}
Elham Kashefi, Damian Markham, Mehdi Mhalla, and Simon Perdrix.
\newblock Information flow in secret sharing protocols.
\newblock In {\em Proc. 5th Workshop on Developments in Computational Models
  (DCM).}, volume~9, pages 87--97. EPTCS, 2009.

\bibitem{kotzig}
Anton Kotzig.
\newblock Eulerian lines in finite 4-valent graphs and their transformations.
\newblock In {\em Colloqium on Graph Theory Tihany 1966}, pages 219--230.
  Academic Press, 1968.

\bibitem{ojoun11}
Ojoung Kwon and Sang-il Oum.
\newblock Pivot-minors and vertex-minors of trees and paths.
\newblock GROW (Kaist), 2011.

\bibitem{leu04}
Debbie~W. Leung.
\newblock Quantum computation by measurements.
\newblock {\em IJQI}, 2:33, 2004.

\bibitem{MS08}
Damian Markham and Barry~C. Sanders.
\newblock Graph states for quantum secret sharing.
\newblock {\em Physical Review A}, 78:042309, 2008.

\bibitem{MMPST11}
Mehdi Mhalla, Mio Murao, Simon Perdrix, Masato Someya, and Peter Turner.
\newblock Which graph states are useful for quantum information processing?
\newblock In {\em Theory of Quantum Computation, Communication and
  Cryptography. (TQC'11)}, volume To appear in Lecture Notes in Computer
  Science, 2011.

\bibitem{MP08-icalp}
Mehdi Mhalla and Simon Perdrix.
\newblock Finding optimal flows efficiently.
\newblock In {\em the 35th International Colloquium on Automata, Languages and
  Programming (ICALP), LNCS}, volume 5125, pages 857--868, 2008.

\bibitem{N01}
Michael~A. Nielsen.
\newblock Universal quantum computation using only projective measurement,
  quantum memory, and preparation of the 0 state.
\newblock {\em Phys. Rev. A}, 308:96--100, 2003.

\bibitem{Oum_pivot}
Sang-il Oum.
\newblock Rank-width and vertex-minors.
\newblock {\em J. Comb. Theory, Ser. B}, 95(1):79--100, 2005.

\bibitem{oumpiv}
Sang-il Oum.
\newblock Excluding a bipartite circle graph from line graphs.
\newblock {\em Journal of Graph Theory}, 60(3), 2009.

\bibitem{OumSeymour}
Sang-il Oum and Paul Seymour.
\newblock Approximating clique-width and branch-width.
\newblock {\em Journal of Combinatorial Theory, Series B}, 96:514--528, 2006.

\bibitem{per05a}
Simon Perdrix.
\newblock State transfer instead of teleportation in measurement-based quantum
  computation.
\newblock {\em International Journal of Quantum Information}, 3(1):219--223,
  2005.

\bibitem{Per07a}
Simon Perdrix.
\newblock Towards minimal resources of measurement-based quantum computation.
\newblock {\em New J. Phys.}, 9 206, June 2007.

\bibitem{Petal}
Robert Prevedel, Philip Walther, Felix Tiefenbacher, Pascal Bohi, Rainer
  Kaltenbaek, Thomas Jennewein, and Anton Zeilinger.
\newblock High-speed linear optics quantum computing using active feed-forward.
\newblock {\em Nature}, 445(7123):65--69, January 2007.

\bibitem{RB01}
Robert Raussendorf and Hans~J. Briegel.
\newblock A one-way quantum computer.
\newblock {\em Phys. Rev. Lett.}, 86:5188--5191, 2001.

\bibitem{RBB03}
Robert Raussendorf, Daniel~E. Browne, and Hans~J. Briegel.
\newblock Measurement-based quantum computation with cluster states.
\newblock {\em Physical Review A}, 68:022312, 2003.

\bibitem{shi02}
Yaoyun Shi.
\newblock Both toffoli and controlled-not need little help to do universal
  quantum computation.
\newblock {\em Quantum Information and Computation}, 3(1):84--92, 2003.

\bibitem{Taka10}
Yasuhiro Takahashi.
\newblock Simple sets of measurements for universal quantum computation and
  graph state preparation.
\newblock In {\em Proceedings of the 5th conference on Theory of quantum
  computation, communication, and cryptography}, TQC'10, pages 26--34, Berlin,
  Heidelberg, 2011. Springer-Verlag.

\bibitem{tuck60}
Albert~W. Tucker.
\newblock A combinatorial equivalence of matrices.
\newblock {\em Combinatorial Analysis}, pages 129--140, 1960.

\bibitem{VdNedge-local-equivalence}
Maarten Van~den Nest and Bart De~Moor.
\newblock Edge-local equivalence of graphs.
\newblock arXiv math.CO/0510246, October 2005.

\bibitem{VdN04}
Maarten Van~den Nest, Jeroen Dehaene, and Bart~De Moor.
\newblock Graphical description of the action of local clifford transformations
  on graph states.
\newblock {\em Physical Review A}, 69:022316, 2004.

\bibitem{VdNMDB06}
Maarten Van~den Nest, Akimasa Miyake, Wolfgang D\"ur, and Hans~J. Briegel.
\newblock Universal resources for measurement-based quantum computation.
\newblock {\em Phys. Rev. Lett.}, 97:150504, Oct 2006.

\bibitem{Wetal}
Philip Walther, Kevin~J. Resch, Terry Rudolph, Emmanuel Schenck, Harald
  Weinfurter, Vlatko Vedral, Markus Aspelmeyer, and Anton Zeilinger.
\newblock Experimental one-way quantum computing.
\newblock {\em Nature}, 434(7030):169--176, March 2005.

\end{thebibliography}
\end{document}